\DeclareMathOperator*{\argmax}{argmax}
\newtheorem{lemma}{Lemma}
\newtheorem{theorem}[lemma]{Theorem}
\newcommand{\E}{\mathbb{E}}
\title{Phased Exploration with Greedy Exploitation in Stochastic Combinatorial Partial Monitoring Games}
\author{
  Sougata Chaudhur\\
  Department of Statistics\\
  University of Michigan Ann Arbor\\
   \texttt{sougata@umich.edu} \\
   \And
   Ambuj Tewari \\
   Department of Statistics and EECS\\
   University of Michigan Ann Arbor\\
   \texttt{tewaria@umich.edu}\\
}
\begin{document}

\maketitle

\begin{abstract}
Partial monitoring games are repeated games where the learner receives feedback that might be different from adversary's move or even the reward gained by the learner. Recently, a general model of combinatorial partial monitoring (CPM) games was proposed \cite{lincombinatorial2014}, where the learner's action space can be exponentially large and adversary samples its moves from a bounded, continuous space, according to a fixed distribution. The paper gave a confidence bound based algorithm (GCB) that achieves $O(T^{2/3}\log T)$ distribution independent and $O(\log T)$ distribution dependent regret bounds. The implementation of their algorithm depends on two separate offline oracles and the distribution dependent regret additionally requires existence of a unique optimal action for the learner. Adopting their CPM model, our first contribution is a Phased Exploration with Greedy Exploitation (PEGE) algorithmic framework for the problem. Different algorithms within the framework achieve $O(T^{2/3}\sqrt{\log T})$ distribution independent and $O(\log^2 T)$ distribution dependent regret respectively. Crucially, our framework needs only the simpler ``argmax'' oracle from GCB and the distribution dependent regret does not require existence of a unique optimal action. Our second contribution is another algorithm, PEGE2, which combines gap estimation with a PEGE algorithm, to achieve an $O(\log T)$ regret bound, matching the GCB guarantee but removing the dependence on size of  the learner's action space. However, like GCB, PEGE2 requires access to both offline oracles and the existence of a unique optimal action. Finally, we discuss how our algorithm can be efficiently applied to a CPM problem of practical interest: namely, online ranking with feedback at the top.
\end{abstract}

\section{Introduction}
\label{introduction}
Partial monitoring (PM) games are repeated games played between a learner and an adversary over discrete time points. At every time point, the learner and adversary each simultaneously select an action, from their respective action sets, and the learner gains a reward, which is a function of the two actions. In PM games, the learner receives limited feedback, which might neither be adversary's move (full information games) nor the reward gained (bandit games). In \emph{stochastic} PM games, adversary generates actions which are independent and identically distributed according to a distribution fixed before the start of the game and unknown to the learner. The learner's objective is to develop a learning strategy that incurs low regret over time, based on the feedback received during the course of the game. Regret is defined as the difference between cumulative reward of the learner's strategy and the best fixed learner's action in hindsight. The usual learning strategies in online games combine some form of exploration (getting feedback on certain learner's actions) and exploitation (playing the perceived optimal action based on current estimates).

There has been a substantial advance in our understanding of \emph{finite} PM games. Starting with early work in the 2000s \cite{piccolboni2001discrete,cesa2006}, this body of research reached a culmination point with a comprehensive and complete classification of finite PM games \citep{bartok2013}. We refer the reader to these works for more references and also note that newer results continue to appear \citep{komiyama2015regret}. Finite PM games restrict both the learner's and adversary's action spaces to be finite, with a very general feedback model. All finite partial monitoring games can be classified into one of four categories, with minimax regret $\Theta(T)$, $\Theta(T^{2/3})$, $\Theta(T^{1/2})$ and $\Theta(1)$. The classification is governed by \emph{global} and \emph{local} \emph{observability} properties pertaining to a game \citep{bartok2013}. Another line of work has extended traditional multi-armed bandit problem (MAB) \citep{auer2002finite} to include combinatorial action spaces for learner (CMAB) \citep{chen2013combinatorial,kveton2015tight}. The combinatorial action space can be exponentially large, rendering traditional MAB algorithms designed for small finite action spaces, impractical with regret bounds scaling with size of action space. The CMAB algorithms exploit a finite subset of base actions, which are specific to the structure of problem at hand, leading to practical algorithms and regret bounds that do not scale with, or scale very mildly with, the size of the learner's action space.   

While finite PM and CMAB problems have witnessed a lot of activity, there is only one paper \cite{lincombinatorial2014} on combinatorial partial monitoring (CPM) games, to the best of our knowledge. There the authors combined the combinatorial aspect of CMAB with the limited feedback aspect of finite PM games, to develop a CPM model. The model extended PM games to include combinatorial action spaces for learner, which might be exponentially large, and infinite action spaces for the adversary. Neither of these situations can be handled by generic algorithms for finite PM games. Specifically, the model considered an action space $\mathcal{X}$ for the learner, that has a small subset of actions defining a \emph{global observable set} (see Assumption 2 in Section~\ref{prelim}). The adversary's action space is a continuous, bounded vector space with the adversary sampling moves from a fixed distribution over the vector space. The reward function considered is a general non-linear function of learner's and adversary's actions, with some restrictions (see Assumptions 1 \& 3 in Section~\ref{prelim}). The model incorporated a linear feedback mechanism where the feedback received is a linear transformation of adversary's move. Inspired by the classic confidence bound algorithms for MABs, such as UCB \citep{auer2002finite}, the authors proposed a Global Confidence Bound (GCB) algorithm that enjoyed two types of regret bound. The first one was a distribution independent $O(T^{2/3}\log T)$ regret bound and the second one was a distribution dependent $O(\log T)$ regret bound. A distribution dependent regret bound involves factors specific to the adversary's fixed distribution, while distribution independent means the regret bound holds over all possible distributions in a broad class of distributions. Both bounds also had a logarithmic dependence on $|\mathcal{X}|$. The algorithm combined online estimation with two offline computational oracles. The first oracle finds the action(s) achieving maximum value of reward function over $\mathcal{X}$, for a particular adversary action (argmax oracle), and the second oracle finds the action(s) achieving second maximum value of reward function over $\mathcal{X}$, for a particular adversary action (arg-secondmax oracle).  Moreover, the distribution dependent regret bound requires existence of a \emph{unique} optimal learner action. The inspiration for the CPM model came from various applications like crowdsourcing and matching problems like matching products with customers.  

{\bf Our Contributions.} We adopt the CPM model proposed  earlier \citep{lincombinatorial2014}. However, instead of using upper confidence bound techniques, our work is motivated by another classic technique developed for MABs, namely that of forced exploration. This technique was already used in the classic paper of Robbins \citep{robbins1985some} and has also been called ``forcing with certainty equivalence'' in the control theory literature \citep{agrawal1989certainty}. We develop a Phased Exploration with Greedy Exploitation (PEGE) algorithmic framework (Section~\ref{algorithmanalysis}) borrowing the PEGE terminology from work on linearly parameterized bandits \cite{rusmevichientong2010linearly}. When the framework is instantiated with different parameters, it achieves $O(T^{2/3}\sqrt{\log T})$ distribution independent and $O(\log^2 T)$ distribution dependent regret. Significantly, the framework combines online estimation with only the argmax oracle from GCB, which is a practical advantage over requiring an additional arg-secondmax oracle. Moreover, the distribution dependent regret does not require existence of unique optimal action. Uniqueness of optimal action can be an unreasonable assumption, especially in the presence of a combinatorial action space. Our second contribution is another algorithm PEGE2 (Section~\ref{gap}) that combines a PEGE algorithm with Gap Estimation, to achieve a distribution dependent $O(\log T)$ regret bound, thus matching the GCB regret guarantee in terms of $T$ and gap. Here, gap refers to the difference between expected reward of optimal and second optimal learner's actions. However, PEGE2 does require access to both the oracles and existence of unique optimal action. On the other hand, like GCB, PEGE2 has the property that its regret is never larger than $O(T^{2/3} \log T)$ even when there is no unique optimal action. Another crucial fact is that all our regret bounds are independent of $|\mathcal{X}|$, only depending on the size of the small \emph{global observable set}. Thus, though we have adopted the CPM model \cite{lincombinatorial2014}, our regret bounds are meaningful for countably infinite or even continuous learner's action space, whereas GCB regret bound has an explicit logarithmic dependence on $|\mathcal{X}|$. We provide a detailed comparison of our work with the GCB algorithm in Section~\ref{comparison}. Finally, we discuss how our algorithms can be efficiently applied in the CPM problem of online ranking with feedback restricted to top ranked items (Section~\ref{application}), a setting already considered \cite{chaudhuri2015} but analyzed in a non-stochastic setting.

\section{Preliminaries and Assumptions}
\label{prelim}
The online game is played between a learner and an adversary, over discrete rounds indexed by $t=1,2,\ldots$. The learner's action set is denoted as $\mathcal{X}$ which can be exponentially large. The adversary's action set is the infinite set $[0,1]^n$. The adversary fixes a distribution $p$ on $[0,1]^n$ before start of the game (adversary's strategy), with $p$ unknown to the learner. At each round of the game, adversary samples $\theta(t) \in [0,1]^n$ according to $p$, with $\E_{\theta (t) \sim p} [\theta (t)]= \theta^*_p$. The learner chooses $x(t) \in \mathcal{X}$ and gets  reward $r(x(t),\theta(t))$. However, the learner might not get to know either $\theta(t)$ (as in a full information game) or $r(x(t),\theta(t))$ (as in a bandit game). In fact, the learner receives, as feedback, a linear transformation of $\theta(t)$.That is, every action $x \in \mathcal{X}$ has an associated transformation matrix $M_x \in\mathbb{R}^{m_x \times n}$. On playing action $x(t)$, the learner receives a feedback $M_{x(t)} \cdot \theta(t) \in \mathbb{R}^{m_x}$. Note that the game with the defined feedback mechanism subsumes full information and bandit games. $M_x= \mathbb{I}^{n \times n}, \ \forall x$ makes it a full information game since $M_x \cdot \theta = \theta$. If $r(x,\theta)= x \cdot \theta$, then $M_x= x \in \mathbb{R}^n$ makes it a bandit game. The dimension $n$, action space $\mathcal{X}$, reward function $r(\cdot,\cdot)$ and transformation matrices $M_x$, $\forall x \in \mathcal{X}$ are known to the learner. The goal of the learner is to minimize the expected regret, which, for a given time horizon $T$, is:
\begin{equation}
R(T)= T \cdot \max_{x \in \mathcal{X}} \E [r(x,\theta)] - \sum_{t=1}^T \E [r(x(t), \theta(t))]
\end{equation}
where the expectation in the first term is taken over $\theta$, w.r.t. distribution $p$, and the second expectation is taken over $\theta$ and possible randomness in the learner's algorithm.

For distribution dependent regret bounds, we define gaps in expected rewards:
Let $x^* \in S(x) = \argmax_{x \in X} \bar{r}(x, \theta^*_p)$. Then $\Delta_x = \bar{r}(x^*, \theta^*_p) - \bar{r}(x, \theta^*_p)$ , $\Delta_{max}= \max \{\Delta_x: x \in \mathcal{X}\}$ and $\Delta= \min\{\Delta_x: x \in \mathcal{X}, \Delta_x >0\}$.

{\bf Assumption 1. (Restriction on Reward Function)} The first assumption is that $\E_{ \theta \sim p} [r(x,\theta)]= \bar{r}(x,\theta^*_p)$, for some function $\bar{r}(\cdot,\cdot)$. That is, the expected reward is a function of $x$ and $\theta^*_p$, which is always satisfied if $r(x,\theta)$ is a linear function of $\theta$, or if  distribution $p$ happens to be any distribution with support $[0,1]^n$ and fully parameterized by its mean $\theta^*_p$. With this assumption, the expected regret becomes:
\begin{equation}
R(T)= T  \cdot \bar{r}(x^*,\theta^*_p) - \sum_{t=1}^T \E [\bar{r}(x(t), \theta^*_p)] .
\end{equation}

{\bf Assumption 2. (Existence of Global Observable Set)} The second assumption is on the existence of a \emph{global observable set}, which is a subset of learner's action set and is required for estimating an adversary's move $\theta$. The \emph{global observable set} is defined as follows: for a set of actions $\sigma= \{x_1,x_2, \ldots, x_{|\sigma|}\}$ $\subseteq \mathcal{X}$, let their transformation matrices be stacked in a top down fashion to obtain a $\mathbb{R}^{\sum_{i=1}^{|\sigma|} m_{x_i} \times n}$ dimensional matrix $M_{\sigma}$. $\sigma$ is said to be a global observable set if $M_{\sigma}$ has full column rank, i.e., rank($M_{\sigma}$) = $n$. Then, the Moore-Penrose pseudoinverse $M^{+}_{\sigma}$ satisfies $M^{+}_{\sigma} M_{\sigma} = \mathbb{I}^{n \times n}$. Without the assumption on the existence of global observable set, it might be the case that even if the learner plays all actions in $\mathcal{X}$ on same $\theta$, the learner might not be able to recover $\theta$ (as $M^{+}_{\sigma} M_{\sigma}= \mathcal{I}^{n \times n}$ will not hold without full rank assumption). In that case, learner might not be able to distinguish between $\theta^*_{p_1}$ and $\theta^*_{p_2}$, corresponding to two different adversary's strategies. Then, with non-zero probability, the learner can suffer $\Omega(T)$ regret and no learner strategy can guarantee a sub-linear in $T$ regret (the intuition forms the base of the \emph{global observability condition} in \cite{piccolboni2001discrete}). Note that the size of the global observable set is small, i.e., $|\sigma| \le n$. A global observable set can be found by including an action $x$ in $\sigma$ if it strictly increases the rank of $M_{\sigma}$, till the rank reaches $n$. There can, of course, be more than one global observable set.

{\bf Assumption 3. (Lipschitz Continuity of Expected Reward Function)} The third assumption is on the Lipschitz continuity of expected reward function in its second argument. More precisely, it is assumed that $\exists \ R>0$ such that $\forall \ x \in \mathcal{X}$, for any $\theta_1$ and $\theta_2$, $|\bar{r}(x,\theta_1) - \bar{r}(x,\theta_2)| \le R \|\theta_1 - \theta_2\|_2$. This assumption is reasonable since otherwise, a small error in estimation of mean reward vector $\theta^*_p$ can introduce a large change in expected reward, leading to difficulty in controlling regret over time. The Lipschitz condition holds trivially for expected reward functions which are linear in second argument. The continuity assumption, along with the fact that adversary's moves are in $[0,1]^n$, implies boundedness of expected reward for any learner's action and any adversary's action. We denote $R_{max} = \max_{ x \in \mathcal{X}, \theta \in [0,1]^n} \bar{r}(x,\theta)$.

The three assumptions above will be made throughout. However, the fourth assumption will only be made in a subset of our results.

{\bf Assumption 4. (Unique Optimal Action)} The optimal action $x^* = \argmax_{x \in \mathcal{X}} \bar{r}(x, \theta^*_p)$ is unique. Denote a second best action (which may not be unique)
by $x^{*}_{-} = \argmax_{x \in \mathcal{X},x \neq x^*} \bar{r}(x, \theta^*_p)$. Note that $\Delta = \bar{r}(x^*, \theta^*_p) - \bar{r}(x^{*}_{-}, \theta^*_p)$.

\section{Phased Exploration with Greedy Exploitation}
\label{algorithmanalysis}

Algorithm~\ref{alg:PhaseForced} (PEGE) uses the classic idea of doing exploration in phases that are successively further apart from each other. In between exploration phases, we select action greedily by completely trusting
the current estimates. The constant $\beta$ controls how much we explore in a given phase and the constant $\alpha$ along with the function $C(\cdot)$ determines how much we exploit.
This idea is classic in the bandit literature \cite{robbins1985some,agrawal1989certainty,rusmevichientong2010linearly} but has not been applied to the CPM framework to the best of our knowledge.

\floatstyle{ruled}
\newfloat{algorithm}{htbp}{loa}
\floatname{algorithm}{Algorithm}
\begin{algorithm}
\caption{The PEGE Algorithmic Framework}
\label{alg:PhaseForced}
\begin{tabbing}
tabs \= tabs \= tabs \= tabs \= tabs \kill
1: Inputs: $\alpha$, $\beta$ and function $C(\cdot)$ (to determine amount of exploration/exploitation in each phase).\\
\\
2: For  \=$b = 1, 2, \ldots,$\\
3: \> {\bf Exploration}\\
4: \> \> For $i = 1$ to $|\sigma|$ \\
5: \>\> \> For $j = 1$ to $b^{\beta}$  \\
6: \>\>\>\> Let $t_{j,i} = t$ and $\theta(t_{j,i},b)= \theta(t)$ where $t$ is current time point\\
7: \>\>\>\> Play $x_i \in \sigma$ and get feedback $M_{x_i} \cdot \theta(t_{j,i}, b) \in \mathbb{R}^{m_{x_i}}$. \\
8: \>\>\> End For\\
9: \>\> End For\\
10: \> {\bf Estimation}\\
11: \> $\tilde{\theta}_{j,i}= M_{\sigma}^{+} (M_{x_1} \cdot \theta(t_{j,1},i), \ldots, M_{x_{|\sigma|}} \cdot \theta(t_{j,|\sigma|}, i)) \in \mathbb{R}^n$.\\
12: \> $\hat{\theta}(b)=\dfrac{\sum_{i=1}^b \sum_{j=1}^{i^{\beta}} \tilde{\theta}_{j,i} }{ \sum_{j=1}^b j^{\beta}} \in \mathbb{R}^n$.\\
13: \> $x(b)\in \argmax_{x \in \mathcal{X}} \bar{r}(x, \hat{\theta}(b))$.\\
14: \> {\bf Exploitation}\\
15: \>\> For $i = 1$ to $\exp(C(b^{\alpha}))$\\
16:\>\>\> Play $x(b)$.\\
17:\>\> End For\\
18: End For
\end{tabbing}
\end{algorithm} 


It is easy to see that the estimators in Algorithm~\ref{alg:PhaseForced} have the following properties: $\E_p [\tilde{\theta}_{j,i}]= M_{\sigma}^{+} (M_{x_1} \cdot \theta^*_p, \ldots, M_{x_{|\sigma|}} \cdot \theta^*_p)= M_{\sigma}^+ M_{\sigma} \cdot \theta^*_p = \theta^*_p$ and hence $\E_p [\hat{\theta}]= \theta^*_p$. Using the fact that $M_{\sigma}^+= (M_{\sigma}^{\top} M_{\sigma})^{-1} M_{\sigma}^{\top}$, we also have the following bound on estimation error of $\theta^*_p$:
\begin{equation}
\begin{aligned}
\label{eq:global-bound}
& \|\tilde{\theta}_{j,i} - \theta^*_p\|_2  \le \|M_{\sigma}^{+} (M_{x_1} \cdot \theta(t_{j,1},i), \ldots, M_{x_{|\sigma|}} \cdot \theta(t_{j,|\sigma|}, i)) -  M_{\sigma}^+ M_{\sigma} \theta^*_p \|_2\\
& = \|(M_{\sigma}^{\top} M_{\sigma})^{-1} \sum_{k=1}^{|\sigma|} M_{x_k}^{\top} M_{x_k} \cdot (\theta(t_{j,k},i)- \theta^*_p)\|_2 
 \le \sqrt{n} \sum_{k=1}^{|\sigma|} \|(M_{\sigma}^{\top} M_{\sigma})^{-1} M_{x_k}^{\top} M_{x_k} \|_2 =: \beta_{\sigma}
\end{aligned}
\end{equation}
where the constant $\beta_{\sigma}$ defined above  depends only on the structure of the linear transformation matrices of the global observer set and not on adversary strategy $p$.

Our first result is about the regret of Algorithm~\ref{alg:PhaseForced} when within phase number $b$, the exploration part spends $|\sigma|$ rounds (constant w.r.t. $b$) and the exploitation part grows polynomially with $b$.

\begin{theorem}
\label{thm:distributionindependentregret}
{\bf (Distribution Independent Regret)} When Algorithm~\ref{alg:PhaseForced} is initialized with the parameters $C(a)= \log a$, $\alpha=1/2$ and $\beta=0$, and the online game is played over $T$ rounds, we get the following bound on expected regret:
\begin{equation}
R(T) \le R_{max} |\sigma| T^{2/3} + 2 R \beta_{\sigma} T^{2/3} \sqrt{ \log 2e^2  + 2 \log T}  + R_{max}
\end{equation}
where $\beta_{\sigma}$ is the constant as defined in Eq.~\ref{eq:global-bound}.
\end{theorem}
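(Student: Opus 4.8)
The plan is to decompose the cumulative regret phase-by-phase into an exploration part and an exploitation part and to bound the number of phases executed within the horizon. Writing $B$ for the index of the last phase that overlaps the first $T$ rounds, the length of phase $b$ is $|\sigma| + \exp(C(b^{1/2})) = |\sigma| + \sqrt{b}$; since $\sum_{b=1}^{B}\sqrt{b} \ge \tfrac{2}{3}B^{3/2}$, the horizon constraint forces $B = O(T^{2/3})$. Each of the $|\sigma|$ exploration rounds in a phase plays some $x_i \in \sigma$ and can cost at most $R_{max}$ in instantaneous regret, so the total exploration regret is at most $R_{max}|\sigma|B = O(R_{max}|\sigma|T^{2/3})$, furnishing the first term.

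For the exploitation rounds of phase $b$ we play $x(b) \in \argmax_{x} \bar{r}(x,\hat{\theta}(b))$, and the key step is to bound the per-round regret by the estimation error. Since $x(b)$ maximizes $\bar{r}(\cdot,\hat{\theta}(b))$ we have $\bar{r}(x(b),\hat{\theta}(b)) \ge \bar{r}(x^*,\hat{\theta}(b))$; adding and subtracting this term and applying the Lipschitz property (Assumption 3) to the pairs $(x^*,\theta^*_p),(x^*,\hat{\theta}(b))$ and $(x(b),\hat{\theta}(b)),(x(b),\theta^*_p)$ gives
\begin{equation}
\bar{r}(x^*,\theta^*_p) - \bar{r}(x(b),\theta^*_p) \le 2R\,\|\hat{\theta}(b) - \theta^*_p\|_2 .
\end{equation}
Thus the exploitation regret of phase $b$ is at most $\sqrt{b}\cdot 2R\,\|\hat{\theta}(b)-\theta^*_p\|_2$.

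It remains to control $\|\hat{\theta}(b)-\theta^*_p\|_2$. By the properties noted after the algorithm, $\hat{\theta}(b)-\theta^*_p = \tfrac{1}{b}\sum_{i=1}^{b}(\tilde{\theta}_{1,i}-\theta^*_p)$ is an average of $b$ independent, mean-zero vectors (independence holds because distinct phases use fresh adversary samples), each bounded in norm by $\beta_\sigma$ by Eq.~\ref{eq:global-bound}. I would apply a vector-valued concentration (Azuma/Hoeffding) inequality of the form $\Pr(\|\hat{\theta}(b)-\theta^*_p\|_2 \ge \epsilon) \le 2e^2\exp(-b\epsilon^2/(2\beta_\sigma^2))$ --- the $2e^2$ constant being exactly what produces the $\log 2e^2$ in the statement --- and choose the radius $\epsilon_b = \beta_\sigma\sqrt{(\log 2e^2 + 2\log T)/b}$, which makes the failure probability $O(1/T)$. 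On the good event the exploitation regret of phase $b$ is at most $\sqrt{b}\cdot 2R\epsilon_b = 2R\beta_\sigma\sqrt{\log 2e^2 + 2\log T}$, a quantity independent of $b$; summing over the $B = O(T^{2/3})$ phases yields the second term $2R\beta_\sigma T^{2/3}\sqrt{\log 2e^2 + 2\log T}$. The bad-event contribution, bounded crudely using $\|\hat{\theta}(b)-\theta^*_p\|_2 \le \beta_\sigma$ together with the $O(1/T)$ failure probability, sums to a lower-order constant, and a single partial phase at the horizon contributes the additive $R_{max}$.

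The main obstacle I anticipate is the vector concentration step: the error is the norm of a sum of bounded independent random vectors, so a coordinatewise scalar Hoeffding bound would introduce an undesirable $\sqrt{n}$ (or $\log n$) factor, and obtaining the clean, dimension-free tail with precisely the $2e^2$ constant requires the Hilbert-space / vector-martingale form of Azuma's inequality. The remaining work is careful bookkeeping --- verifying $B = O(T^{2/3})$ with the right constant, checking that the $\sqrt{b}$ exploitation steps exactly cancel the $1/\sqrt{b}$ in $\epsilon_b$ so that each phase contributes an equal amount, and confirming that the bad-event and boundary terms are genuinely lower order. I note that a second-moment argument would instead give $\E\|\hat{\theta}(b)-\theta^*_p\|_2 \le \beta_\sigma/\sqrt{b}$ and remove the logarithmic factor entirely; the stated bound follows the high-probability concentration route, which is what matches the $\sqrt{\log 2e^2 + 2\log T}$ form.
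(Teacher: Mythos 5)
Your proposal is correct and follows essentially the same route as the paper's proof: a phase-wise exploration/exploitation decomposition with $K\sim T^{2/3}$ phases, the Lipschitz-plus-argmax argument bounding per-round exploitation regret by $2R\|\hat{\theta}(b)-\theta^*_p\|_2$, and the dimension-free vector-valued Azuma--Hoeffding inequality (Theorem 1.8 of Hayes, which is exactly where the $2e^2$ constant comes from) with failure probability set to $O(1/T)$ per phase. The only cosmetic differences are that the paper applies the concentration directly to the reward deviations via its Lemma~\ref{lem:conc-bound} and bounds the bad-event regret by $R_{max}$ per round rather than via $\|\hat{\theta}(b)-\theta^*_p\|_2\le\beta_\sigma$; neither changes the substance.
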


Our next result is about the regret of Algorithm~\ref{alg:PhaseForced} when within phase number $b$, the exploration part spends $|\sigma| \cdot b $ rounds (linearly increasing with $b$) and the exploitation part grows exponentially with $b$.

\begin{theorem}
\label{thm:distributiondependentregret1}
{\bf (Distribution Dependent Regret)} When Algorithm~\ref{alg:PhaseForced} is initialized with the parameters $C(a)= h \cdot a$, for a tuning parameter $h>0$, $\alpha=1$ and $\beta=1$, and the online game is played over $T$ rounds, we get the following bound on expected regret:
\begin{equation}
R(T) \le \sum_{x \in \sigma} \Delta_x \left(\dfrac{\log T}{h} \right)^2 + \dfrac{4 \sqrt{2 \pi} e^2 R \Delta_{max} \beta_{\sigma}}{\Delta} e^{\frac{h^2 (2 R^2 \beta^2_{\sigma})}{\Delta^2}} .
\end{equation}
\end{theorem}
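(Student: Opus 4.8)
The plan is to split the cumulative regret into the contributions of the exploration and exploitation blocks, sum these over the phases $b=1,2,\ldots,B$ executed within the horizon $T$, and control the two resulting series separately. I would begin with the bookkeeping fact that relates the number of phases to $T$: since with $\alpha=1$ and $C(a)=ha$ phase $b$ contains $\exp(hb)$ exploitation rounds, the completed phases satisfy $\sum_{b=1}^{B}\exp(hb)\le T$, hence $\exp(hB)\le T$ and $B\le (\log T)/h$. For the exploration regret, with $\beta=1$ each of the $|\sigma|$ global-observer actions $x_i\in\sigma$ is played exactly $b$ times in phase $b$, and each such round costs at most its gap $\Delta_{x_i}$. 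Summing gives an exploration regret of at most $\sum_{b=1}^{B}b\sum_{x\in\sigma}\Delta_x=\tfrac{B(B+1)}{2}\sum_{x\in\sigma}\Delta_x$, which with $B\le(\log T)/h$ yields the first term $\sum_{x\in\sigma}\Delta_x\big((\log T)/h\big)^2$.

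The core of the argument is a determinism lemma controlling the exploitation regret: if $\|\hat{\theta}(b)-\theta^*_p\|_2<\Delta/(2R)$, then the greedy action $x(b)\in\argmax_{x}\bar{r}(x,\hat{\theta}(b))$ is optimal and contributes zero regret. I would prove this from the optimality of $x(b)$ under $\hat{\theta}(b)$ together with two applications of the Lipschitz Assumption~3, giving $\bar{r}(x^*,\theta^*_p)-\bar{r}(x(b),\theta^*_p)\le 2R\|\hat{\theta}(b)-\theta^*_p\|_2<\Delta$, which is impossible for a suboptimal action since every such action has gap at least $\Delta$. Crucially this uses only the minimum positive gap $\Delta$ and never uniqueness of the optimizer, which is exactly why Assumption~4 is not required here. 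Consequently the expected exploitation regret of phase $b$ is at most $\Delta_{max}\exp(hb)\,P\big(\|\hat{\theta}(b)-\theta^*_p\|_2\ge\Delta/(2R)\big)$.

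To bound this probability I would use that $\hat{\theta}(b)$ is an average of $N_b=b(b+1)/2$ independent, unbiased estimators each lying within $\beta_\sigma$ of $\theta^*_p$ by Eq.~\ref{eq:global-bound}, and invoke a sub-Gaussian tail bound for bounded vector averages (in the spirit of the vector Azuma/Hoeffding inequalities), which after absorbing lower-order terms takes the form $P(\|\hat{\theta}(b)-\theta^*_p\|_2\ge\epsilon)\le 2e\exp(-N_b\epsilon^2/\beta_\sigma^2)$. Setting $\epsilon=\Delta/(2R)$ turns the per-phase exploitation regret into $2e\,\Delta_{max}\exp\!\big(hb-c\,b^2\Delta^2/(R^2\beta_\sigma^2)\big)$ for an explicit constant $c$. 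Summing over $b$ is the decisive step: the Gaussian-in-$b$ factor $\exp(-\Theta(b^2))$ dominates the exponentially growing exploitation count $\exp(hb)$, so the series converges to a quantity independent of $T$. I would complete the square in $b$, locate the peak at $b^\star=\Theta(hR^2\beta_\sigma^2/\Delta^2)$, bound the discrete sum by the corresponding Gaussian integral (this is where the $\sqrt{2\pi}$ enters), and read off the peak value $\exp\!\big(2h^2R^2\beta_\sigma^2/\Delta^2\big)$ together with the width factor $R\beta_\sigma/\Delta$, producing the second term of the stated bound.

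The main obstacle is precisely this last summation: one must verify that the exponential blow-up of the exploitation block does not defeat the concentration, i.e., that $\sum_b\exp(hb)\,P(\text{bad in phase }b)$ is $O(1)$ in $T$ rather than growing. This is what forces the choices $\alpha=1,\beta=1$ (so that $N_b$ grows quadratically and the penalty is a genuine Gaussian in $b$), and it is the source of the $\exp(h^2R^2\beta_\sigma^2/\Delta^2)$ dependence, which exposes the exploration--exploitation trade-off in $h$: increasing $h$ shrinks the $\big((\log T)/h\big)^2$ exploration term but inflates the constant exploitation term. The remaining work, namely tracking the exact sub-Gaussian constant and the slack incurred in passing from the discrete sum to the Gaussian integral, is routine and only affects the numerical prefactors ($2e$, $\sqrt{2\pi}$, and the final $4\sqrt{2\pi}e^2$).
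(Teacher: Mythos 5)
Your proposal follows essentially the same route as the paper's proof: the same phase count $B\le(\log T)/h$, the same $\frac{B(B+1)}{2}\sum_{x\in\sigma}\Delta_x$ exploration bound, a suboptimal-selection event controlled via Lipschitz continuity plus the minimum gap $\Delta$ (the paper's Lemma~\ref{lem:suboptimal-arm}, likewise not requiring uniqueness), concentration of $\hat{\theta}(b)$ over the $b(b+1)/2$ exploration samples via a vector Azuma--Hoeffding inequality (the paper's Lemma~\ref{lem:conc-bound}), and the final summation of $\exp(hb-\Theta(b^2))$ bounded by a Gaussian integral read off as a moment generating function. The argument is correct and matches the paper up to the routine constant-tracking you defer.
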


Such an explicit bound for a PEGE algorithm that is polylogarithmic in $T$ and explicitly states the multiplicative and additive constants involved in not known, to the best of our knowledge, even in the bandit literature (e.g., earlier bounds \cite{agrawal1989certainty} are asymptotic) whereas here we prove it in the CPM setting. Note that the additive constant above, though finite, blows up exponentially fast as $\Delta \to 0$ for a fixed $h$. It is well behaved however, if the tuning parameter $h$ is on the same scale as $\Delta$. This line of thought motivates us to estimate the gap to within constant factors and then feed that estimate into a PEGE algorithm. This is what we will do in the next section.

\section{Combining Gap Estimation with PEGE}
\label{gap}

\floatstyle{ruled}
\newfloat{algorithm}{htbp}{loa}
\floatname{algorithm}{Algorithm}
\begin{algorithm}
\caption{Algorithm for Gap Estimation}
\label{alg:gap}
\begin{tabbing}
tabs \= tabs \= tabs \= tabs \= tabs \kill
1: Inputs: $T_0$ (exploration threshold) and $\delta$ (confidence parameter) \\\\
2: For  \=$b$ = 1, 2, \ldots,\\
3: \> {\bf Exploration}\\
4: \> \> For $i$ = 1 to $|\sigma|$ \\
5: \>\>\> (Denote) $t_{i} = t$ and $\theta(t_{i},b)= \theta(t)$ ($t$ is current time point).\\
6: \>\>\> Play $x_i \in \sigma$ and get feedback $M_{x_i} \cdot \theta(t_{i}, b) \in \mathbb{R}^{m_{x_i}}$. \\
7: \>\> End For\\
8: \> {\bf Estimation}\\
9: \> $\tilde{\theta}_{b}= M_{\sigma}^{+} (M_{x_1} \cdot \theta(t_{1},b), \ldots, M_{x_{|\sigma|}} \cdot \theta(t_{|\sigma|}, b)) \in \mathbb{R}^n$.\\
10: \> $\hat{\theta}(b)=\dfrac{\sum_{i=1}^b \tilde{\theta}_{i} }{b} \in \mathbb{R}^n$.\\\\
11: \> {\bf Stopping Rule} ($w(b)$ is defined as in Eq.~\eqref{eq:confboundwidth}) \\
12: \> If $\argmax_{x \in \mathcal{X}} \bar{r}(x, \hat{\theta}(b))$ is unique:\\
13: \>\> $\hat{x}(b) =  \argmax_{x \in \mathcal{X}} \bar{r}(x, \hat{\theta}(b))$ \\
14: \>\> $\hat{x}_{-}(b) = \argmax_{x \in \mathcal{X}, x \neq \hat{x}(b)} \bar{r}(x, \hat{\theta}(b))$ (need not be unique) \\
15: \>\> If $\bar{r}(\hat{x}(b), \hat{\theta}(b)) - \bar{r}(\hat{x}_{-}(b), \hat{\theta}(b)) >  6w(b)$: \\
16: \>\>\> STOP and output $\hat{\Delta} = \bar{r}(\hat{x}(b), \hat{\theta}(b)) - \bar{r}(\hat{x}_{-}(b), \hat{\theta}(b))$ \\
17: \>\> End If \\
18: \> End If \\
19: \> If $b > T_0$:\\
20: \> \> STOP and output ``threshold exceeded'' \\
21: \> End If \\
22: End For 
\end{tabbing}
\end{algorithm}

Algorithm~\ref{alg:gap} tries to estimate the gap $\Delta$ to within a constant multiplicative factor. However, if there is no unique optimal action or when the true gap is small, gap estimation
can take a very large amount of time. To prevent that from happening, the algorithm also takes in a threshold $T_0$ as input and definitely stops if the threshold is reached.
The result below assures us that, with high probability, the algorithm behaves as expected. That is, if there is a unique optimal action and the gap is large enough to be estimated
with a given confidence before the threshold $T_0$ kicks in, it will output an estimate $\hat{\Delta}$ in the range $[0.5\Delta, 1.5\Delta]$. On the other hand, if there is no unique optimal action, it does not generate an estimate of $\Delta$ and instead runs out of the exploration budget $T_0$.

\begin{theorem}
\label{thm:gap}
{\bf (Gap Estimation within Constant Factors)} Let $T_0 \ge 1$ and $\delta \in (0,1)$ and define
$
T_1(\delta) = \frac{256 R^2 \beta_\sigma^2}{\Delta^2} \log \frac{512 e^2 R^2 \beta_\sigma^2}{\Delta^2 \delta}
$,
$
T_2(\delta) = \frac{16 R^2 \beta_\sigma^2}{\Delta^2} \log \frac{4e^2}{\delta} 
$.
Consider Algorithm~\ref{alg:gap} run with
\begin{equation}\label{eq:confboundwidth}
w(b) = \sqrt{\dfrac{R^2 \beta_{\sigma}^2 \log (\frac{4e^2b^2}{\delta})}{b}} .
\end{equation}
Then, the following 3 claims hold.
\begin{enumerate}
\item
Suppose Assumption 4 holds and $T_1(\delta) < T_0$. Then with probability at least $1-\delta$, Algorithm~\ref{alg:gap} stops in $T_1(\delta)$ episodes and outputs an estimate $\hat{\Delta}$ that satisfies
$\frac{1}{2} \Delta \le \hat{\Delta} \le \frac{3}{2} \Delta$.
\item
Suppose Assumption 4 holds and $T_0 \le T_1(\delta)$. Then with probability at least $1-\delta$, the algorithm either outputs ``threshold exceeded'' or outputs an estimate $\hat{\Delta}$ that satisfies
$\frac{1}{2} \Delta \le \hat{\Delta} \le \frac{3}{2} \Delta$. Furthermore, if it outputs $\hat{\Delta}$, it must be the case that the algorithm stopped at an episode $b$ such that $T_2(\delta) < b < T_0$.
\item
Suppose Assumption 4 fails. Then, with probability at least $1-\delta$, Algorithm~\ref{alg:gap} stops in $T_0$ episodes and outputs ``threshold exceeded''.
\end{enumerate}
\end{theorem}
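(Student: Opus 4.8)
The plan is to reduce all three claims to a single deterministic argument on a high-probability ``good event'', and then do the probabilistic work once, up front. Define
$E = \{\, \forall b \ge 1 : \sup_{x \in \mathcal{X}} |\bar{r}(x,\hat{\theta}(b)) - \bar{r}(x,\theta^*_p)| \le w(b) \,\}$.
Since each per-episode estimator satisfies $\E_p[\tilde{\theta}_b] = \theta^*_p$ and $\|\tilde{\theta}_b - \theta^*_p\|_2 \le \beta_\sigma$ by Eq.~\eqref{eq:global-bound}, and the $\tilde{\theta}_b$ are i.i.d., I would apply a concentration inequality for the norm of an average of norm-bounded i.i.d.\ vectors to control $\|\hat{\theta}(b) - \theta^*_p\|_2$, and then convert to a reward error uniformly over $x$ via the Lipschitz Assumption~3 (which gives $|\bar{r}(x,\hat{\theta}(b)) - \bar{r}(x,\theta^*_p)| \le R\|\hat{\theta}(b) - \theta^*_p\|_2$). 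The whole point of the $\log(4e^2 b^2/\delta)$ term inside $w(b)$ is that it calibrates the per-episode failure probability to $\delta/(2b^2)$, so that a union bound over $b$ (using $\sum_b b^{-2} = \pi^2/6 < 2$) yields $P(E) \ge 1 - \delta$. Every claim is then established by showing it holds deterministically on $E$.

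The key deterministic lemma on $E$ is that empirical and true reward \emph{differences} between any pair of actions agree up to $2w(b)$. From this I would extract two facts. First, suppose the stopping condition $\hat{g}(b) := \bar{r}(\hat{x}(b),\hat{\theta}(b)) - \bar{r}(\hat{x}_-(b),\hat{\theta}(b)) > 6w(b)$ fires at episode $b$. Because $\hat{x}_-(b)$ maximizes the empirical reward over $x \neq \hat{x}(b)$, the empirical margin of $\hat{x}(b)$ over \emph{every} other action is at least $\hat{g}(b) > 6w(b)$; transferring to the truth this margin is still $> 4w(b) > 0$, forcing $\hat{x}(b) = x^*$. Then, since $\max$ is $1$-Lipschitz, $|\hat{g}(b) - \Delta| \le 2w(b)$, and combining with $w(b) < \hat{\Delta}/6$ pins $\hat{\Delta} \in [\tfrac34\Delta, \tfrac32\Delta] \subseteq [\tfrac12\Delta, \tfrac32\Delta]$. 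This single computation establishes the \emph{output-correctness} portion of Claims 1 and 2 at once, with no lower bound on $b$ needed.

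The remaining content is stopping-time control. Firing requires $6w(b) < \hat{g}(b) \le \Delta + 2w(b)$, i.e.\ $w(b) < \Delta/4$, and since $w(b)$ decreases in $b$ this forces $b > T_2(\delta)$ — the lower bound asserted in Claim 2; the upper bound $b < T_0$ is immediate from the order of checks. Conversely, firing is \emph{guaranteed} once $w(b) < \Delta/8$: then $x^*$ is the unique empirical argmax and $\hat{g}(b) \ge \Delta - 2w(b) > 6w(b)$, so I would verify $w(T_1(\delta)) < \Delta/8$ to conclude the algorithm stops by $T_1(\delta)$ (Claim 1). For Claim 3, when Assumption~4 fails two actions tie for the true optimum, so on $E$ the empirical top-two gap never exceeds $2w(b) < 6w(b)$; the gap condition can never fire and the algorithm runs to the threshold $T_0$. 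I would also record the easy monotonicity of $w(b)$, which the $T_1$ and $T_2$ arguments use.

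The main obstacle is the self-referential nature of $w(b) < \Delta/8$, which unfolds to $b > \tfrac{64 R^2 \beta_\sigma^2}{\Delta^2}\log(4e^2 b^2/\delta)$ with $b$ appearing inside the logarithm. Converting this into the clean closed form $b = T_1(\delta)$ and checking that the stated constants $256$ and $512$ make $w(T_1(\delta)) < \Delta/8$ hold is the calculational crux: it needs a $\log x \le x$ (or doubling) estimate to absorb the $\log T_1$ term, and this is precisely where the specific numerics are fixed. The only other bookkeeping-heavy step is matching the concentration constant to $w(b)$ so that the union bound closes at exactly $1-\delta$; once those two are in place, everything else is routine deterministic manipulation on $E$.
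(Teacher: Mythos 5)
Your proposal is correct and follows essentially the same route as the paper's proof: the same high-probability event $E$ built from the concentration bound with per-episode failure probability $\delta/2b^2$, the same deterministic thresholds ($8w(b)<\Delta$ forces firing and yields $T_1$, firing forces $4w(b)<\Delta$ and yields $T_2$, and firing identifies $\hat{x}(b)=x^*$), and the same overlap argument for Claim 3. Your packaging of the range bound via the $1$-Lipschitzness of $\max$ and $w(b)<\hat{\Delta}/6$ is a slightly cleaner (and in fact tighter, $[\tfrac34\Delta,\tfrac32\Delta]$) version of the paper's two inequality chains, but it is the same argument in substance.
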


Equipped with Theorem~\ref{thm:gap}, we are now ready to combine Algorithm~\ref{alg:gap} with Algorithm~\ref{alg:PhaseForced} to give Algorithm~\ref{alg:combined}. Algorithm~\ref{alg:combined} first calls Algorithm~\ref{alg:gap}.
If Algorithm~\ref{alg:gap} outputs an estimate $\hat{\Delta}$ it is fed into Algorithm~\ref{alg:PhaseForced}. If the threshold $T_0$ is exceeded, then the remaining time is spent in pure exploitation.
Note that by choosing $T_0$ to be of order $T^{2/3}$ we can guarantee a worst case regret of the same order even when unique optimality assumption fails. For PM games that are globally
observable but not locally observable, such a distribution independent $O(T^{2/3})$ bound is known to be optimal \cite{bartok2013}.

\begin{theorem}
\label{thm:combined}
{\bf (Regret Bound for PEGE2)}
Consider Algorithm~\ref{alg:combined} run with knowledge of the number $T$ of rounds.
Consider the distribution independent bound
\[
B_1(T) = 2(2 R \beta_\sigma |\sigma|^2 R^2_{max} T)^{2/3} \sqrt{\log (4e^2 T^3}) + R_{max} ,
\]
and the distribution dependent bound
\[
B_2(T) = \frac{256 R^2 \beta_\sigma^2}{\Delta^2} \log \frac{512 e^2 R^2 \beta_\sigma^2 T}{\Delta^2} R_{max} |\sigma|
+ \sum_{x \in \sigma} \Delta_x \frac{36 R^2 \beta_\sigma^2 \log T}{\Delta^2} + \frac{8 e^2 R^2 \beta_\sigma^2}{\Delta^2} + R_{max} .
\]
If Assumption 4 fails, then the expected regret of Algorithm~\ref{alg:combined} is bounded as
$
R(T) \le B_1(T)
$.
If Assumption 4 holds, then the expected regret of Algorithm~\ref{alg:combined} is bounded as
\begin{equation}\label{eq:cases}
R(T) \le \begin{cases}
B_2(T) & \text{ if } \ T_1(\delta) < T_0 \\
O(T^{2/3} \log T) & \text{ if } \ T_0 \le T_1(\delta)
\end{cases} ,
\end{equation}
where $T_1(\delta)$ is as defined in Theorem \ref{thm:gap} and $\delta, T_0$ are as defined in Algorithm \ref{alg:combined}.
\end{theorem}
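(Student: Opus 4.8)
The plan is to analyze Algorithm~\ref{alg:combined} by conditioning on the high-probability guarantees of Theorem~\ref{thm:gap}, handling the residual $\delta$-probability failure event separately, and bounding the regret of the gap-estimation phase and of the subsequent phase additively. Throughout I would fix the confidence parameter to $\delta = 1/T$ and the exploration threshold to $T_0 = \Theta(T^{2/3})$. With $\delta = 1/T$, on the complement of the good event of Theorem~\ref{thm:gap} the per-round regret is at most $R_{max}$ over at most $T$ rounds, so this event contributes at most $\delta \cdot T \cdot R_{max} = R_{max}$ to the expected regret; this is exactly the additive $R_{max}$ appearing in both $B_1(T)$ and $B_2(T)$. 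It then remains to bound the conditional regret on the good event in each of the three scenarios of Theorem~\ref{thm:gap}.

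Consider first the main distribution-dependent case: Assumption 4 holds and $T_1(\delta) < T_0$. By Theorem~\ref{thm:gap}(1), gap estimation stops within $T_1(\delta)$ episodes and returns $\hat\Delta \in [\tfrac12\Delta, \tfrac32\Delta]$. Each episode of Algorithm~\ref{alg:gap} plays the $|\sigma|$ observer actions once, so the regret of this phase is at most $R_{max}|\sigma| T_1(\delta)$, which is exactly term~1 of $B_2(T)$ once $T_1(\delta)$ is expanded with $\delta = 1/T$. I would then feed $\hat\Delta$ into Algorithm~\ref{alg:PhaseForced} run with $\beta = 0$, $\alpha = 1$, $C(a) = h\,a$, and $h \asymp \hat\Delta^2/(R^2\beta_\sigma^2)$, and re-derive the regret of this configuration (which is \emph{not} the one analyzed in Theorem~\ref{thm:distributiondependentregret1}). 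Because $\beta = 0$ each phase explores for only $|\sigma|$ rounds, and since the exploitation length grows as $\exp(hb)$ the number of phases needed to fill the horizon is $B \asymp (\log T)/h \asymp R^2\beta_\sigma^2(\log T)/\Delta^2$; the exploration regret is $\sum_{x\in\sigma}\Delta_x \cdot B$, matching term~2 of $B_2(T)$. For the exploitation regret I would bound the probability that the greedy action $x(b)$ is suboptimal using the estimation-error bound of Eq.~\ref{eq:global-bound} together with a Hoeffding/Azuma concentration of $\hat\theta(b)$, giving a failure probability decaying like $\exp(-c\,b\,\Delta^2/(R^2\beta_\sigma^2))$; the exploitation regret is then dominated by the geometric sum $\Delta_{max}\sum_b \exp\!\big(b(h - c\Delta^2/(R^2\beta_\sigma^2))\big)$, which converges to a constant of order $R^2\beta_\sigma^2/\Delta^2$ (term~3) precisely because $h$ lies below the threshold $c\Delta^2/(R^2\beta_\sigma^2)$. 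The two-sided bound $\hat\Delta \in [\tfrac12\Delta,\tfrac32\Delta]$ is what converts all the $\hat\Delta$ factors into the stated $\Delta$ factors.

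For the fallback scenarios the argument is simpler. If Assumption 4 fails, Theorem~\ref{thm:gap}(3) gives that the algorithm almost surely outputs ``threshold exceeded'' at episode $T_0$; the remaining rounds are spent purely exploiting the greedy action based on $\hat\theta(T_0)$. I would bound the per-round exploitation regret by $2R\|\hat\theta(T_0)-\theta^*_p\|_2$ using Assumption 3 (Lipschitzness), concentrate $\|\hat\theta(T_0)-\theta^*_p\|_2$ via Eq.~\ref{eq:global-bound} and a concentration inequality, and then balance the exploration cost $R_{max}|\sigma| T_0$ against the exploitation cost $\asymp R\beta_\sigma T\sqrt{(\log T)/T_0}$ through $T_0 \asymp T^{2/3}$; an AM--GM-type optimization then yields $B_1(T)$. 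In the remaining subcase (Assumption 4 holds but $T_0 \le T_1(\delta)$), Theorem~\ref{thm:gap}(2) guarantees that the algorithm either reports ``threshold exceeded'' --- in which case the identical pure-exploitation bound gives $O(T^{2/3}\log T)$ --- or returns a valid $\hat\Delta$ at some episode $b < T_0 = \Theta(T^{2/3})$, and the stopping condition $\hat\Delta > 6w(b)$ caps the ensuing exploration regret at $O(T_0 \cdot \mathrm{polylog}) = O(T^{2/3}\log T)$, matching the second branch of Eq.~\eqref{eq:cases}.

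The main obstacle is the second paragraph: establishing the regret of the $\beta = 0$, exponentially-growing-exploitation configuration of Algorithm~\ref{alg:PhaseForced}, and in particular showing that its exploitation regret is an absolute constant rather than growing with $T$. This boundedness rests entirely on the convergence of $\sum_b \exp(b(h - c\Delta^2/(R^2\beta_\sigma^2)))$, i.e.\ on $h$ being chosen strictly below the concentration threshold $c\Delta^2/(R^2\beta_\sigma^2)$. This is exactly what a constant-factor gap estimate buys us: whereas Theorem~\ref{thm:distributiondependentregret1} must hedge with $\beta = 1$ (paying an extra $\log T$ through $B^2 \asymp (\log T)^2$ exploration) to keep the exploitation term finite for an unknown $\Delta$, knowing $\hat\Delta \in [\tfrac12\Delta,\tfrac32\Delta]$ lets us set $h$ aggressively at order $\Delta^2$ and still guarantee convergence, thereby trading the $\log^2 T$ of plain PEGE for the $\log T$ of PEGE2. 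The remaining effort is bookkeeping: verifying that the constants line up after substituting $\delta = 1/T$, and confirming that the fallback branches never exceed $O(T^{2/3}\log T)$, both of which follow from the choice $T_0 \asymp T^{2/3}$.
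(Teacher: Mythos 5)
Your proposal is correct and follows essentially the same route as the paper: condition on the good event of Theorem~\ref{thm:gap} (absorbing the bad event into $\delta T R_{max} = R_{max}$), bound the gap-estimation phase by $R_{max}|\sigma| T_1(\delta)$, and analyze the $\beta=0$, exponentially-growing-exploitation instance of Algorithm~\ref{alg:PhaseForced} exactly as the paper does in its separately stated Theorem~\ref{thm:distributiondependentregret2}, with the fallback branches handled by the pure-exploitation bound $2w(T_0)T + T_0|\sigma|R_{max}$ and the observation that the stopping episode exceeds $T_2(\delta)$. The only cosmetic difference is that you inline the derivation of the version-2 PEGE bound rather than invoking it as a standalone lemma.
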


In the above theorem, note that $T_1(\delta)$ scales as $\Theta(\tfrac{1}{\Delta^2} \log \tfrac{T}{\Delta^2})$ and $T_0$ as
$\Theta(T^{2/3})$. Thus, the two cases in Eq.~\eqref{eq:cases} correspond to large gap and small gap situations respectively.

\floatstyle{ruled}
\newfloat{algorithm}{htbp}{loa}
\floatname{algorithm}{Algorithm}
\begin{algorithm}
\caption{Algorithm Combining PEGE with Gap Estimation (PEGE2)}
\label{alg:combined}
\begin{tabbing}
tabs \= tabs \= tabs \= tabs \= tabs \kill
1: Input: $T$ (total number of rounds) \\
\\
2: Call Algorithm \ref{alg:gap} with inputs $T_0 = \left(\frac{2R\beta_\sigma T}{|\sigma| R_{max}}\right)^{2/3}$ and $\delta = 1/T$ \\
3: If Algorithm \ref{alg:gap} returns ``threshold exceeded'': \\
4: \> Let $\hat{\theta}(T_0)$ be the latest estimate of $\theta^*_p$ maintained by Algorithm~\ref{alg:gap} \\
5: \> Play $\hat{x}(T_0) = \argmax_{x \in \mathcal{X}} \bar{r}(x, \hat{\theta})$ for the remaining $T-T_0|\sigma|$ rounds \\
6: Else: \\
7: \> Let $\hat{\Delta}$ be the gap estimate produced by Algorithm~\ref{alg:gap} \\
8: \> For all remaining time steps, run Algorithm \ref{alg:PhaseForced} with parameters $C(a)= h a$ with \\
\> $h = \frac{\hat{\Delta}^2}{9 R^2 \beta^2_{\sigma}}$, $\alpha=1$, $\beta=0$\\
9: End If
\end{tabbing}
\end{algorithm}

\section{Comparison with GCB Algorithm}
\label{comparison}

We provide a detailed comparison of our results with those obtained for GCB \cite{lincombinatorial2014}.
\begin{enumerate*}[label=(\alph*)]
\item
While we use the same CPM model, our work is inspired by the forced exploration technique while GCB is inspired by the confidence bound technique, both of which are classic in the bandit literature.
\item
One instantiation of our PEGE framework gives an $O(T^{2/3}\sqrt{\log T})$ distribution independent regret bound (Theorem~\ref{thm:distributionindependentregret}), which does not require call to arg-secondmax oracle. This is of substantial practical advantage over GCB since even for linear optimization problems over polyhedra, standard routines usually do not have option of computing action(s) that achieve second maximum value for the objective function. 
\item
Another instantiation of the PEGE framework gives an $O(\log^2 T)$ distribution dependent regret bound (Theorem~\ref{thm:distributiondependentregret1}), which neither requires call to arg-secondmax oracle nor the assumption of existence of unique optimal action for learner. This is once again important, since the assumption of existence of unique optimal action might be impractical, especially for exponentially large action space. However, the caveat is that improper setting of the tuning parameter $h$ in Theorem~\ref{thm:distributiondependentregret1} can lead to a large additive component in the regret.
\item
A crucial point, which we had highlighted in the beginning, is that the regret bounds achieved by PEGE and PEGE2 do not have dependence on size of learner's action space, i.e., $|\mathcal{X}|$. The dependence is only on the size of global observable set $\sigma$, which is guaranteed to be not more than dimension of adversary's action space. Thus, though we have adopted the CPM model \cite{lincombinatorial2014}, our algorithms achieve meaningful regret bounds for countably infinite or even continuous learner's action space. In contrast, the GCB regret bounds have explicit, logarithmic dependence on size of learner's action space. Thus, their results cannot be extended to problems with infinite learner's action space (see Section~\ref{application} for an example), and are restricted to large, but \emph{finite} action spaces.
\item
The PEGE2 algorithm is a true analogue of the GCB algorithm, matching the regret bounds of GCB in terms of $T$ and gap $\Delta$ with the advantage that it has no dependence on $|\mathcal{X}|$. The disadvantage, however, is that PEGE2 requires knowledge of time horizon $T$, while GCB is an anytime algorithm.
\end{enumerate*}

\section{Application to Online Ranking}
\label{application}
A recent paper studied an interesting problem of online ranking with feedback restricted to top ranked items \cite{chaudhuri2015}. The problem was studied in a non-stochastic setting, i.e., it was assumed that an oblivious adversary generates reward vectors. Moreover, the learner's action space was exponentially large in number of items to be ranked. The paper made the connection of the problem setting to PM games (but not combinatorial PM games) and proposed an efficient algorithm for the specific problem at hand. However, a careful reading of the paper shows that their algorithmic techniques can handle the CPM model we have discussed so far, but in the \emph{non-stochastic} setting. The reward function is linear in both learner's and adversary's moves, adversary's move is restricted to a finite space of vectors and feedback is a linear transformation of adversary's move. In this section, we give a brief description of the problem setting and show how our algorithms can be used to efficiently solve the problem of online ranking with feedback on top ranked items in the \emph{stochastic} setting. We also give an example of how the ranking problem setting can be somewhat naturally extended to one which has continuous action space for learner, instead of large but finite action space.

The paper considered an online ranking problem, where a learner repeatedly re-ranks a set of $n$, fixed items, to satisfy diverse users' preferences, who visit the system sequentially. Each learner action $x$ is a permutation of the $n$ items. Each user has like/dislike preference for each item, varying between users, with each user's preferences encoded as $n$ length binary relevance vectors $\theta$. Once the ranked list of items is presented to the user, the user scans through the items, but gives relevance feedback only on top ranked item. However, the performance of the learner is judged based on full ranked list and unrevealed, full relevance vector. Thus, we have a PM game, where neither adversary generated relevance vector nor reward is revealed to learner. The paper showed how a number of practical ranking measures, like Discounted Cumulative Gain (DCG), can be expressed as a linear function, i.e., $r(x, \theta)= f(x) \cdot \theta$. The practical motivation of the work was based on learning a ranking strategy to satisfy diverse user preferences, but with limited feedback received due to user burden constraints and privacy concerns.

{\bf Online Ranking with Feedback at Top as a Stochastic CPM Game.} We show how our algorithms can be applied in online ranking with feedback for top ranked items by showing how it is a specific instance of the CPM model and how our key assumptions are satisfied. The learner's action space is the finite but exponentially large space of  $\mathcal{X}=n!$ permutations. Adversary's move is an $n$ dimensional relevance vector, and thus, is restricted to $\{0,1\}^n$ (finite space of size $2^n$) contained in $[0,1]^n$. In the stochastic setting, we can assume that adversary samples $\theta \in \{0,1\}^n$  from a fixed distribution on the space. Since the feedback on playing a permutation is the relevance of top ranked item, each move $x$ has an associated transformation matrix $M_x \in \{0,1\}^{n}$, with $1$ in the place of the item which is ranked at the top by $x$ and $0$ everywhere else. Thus, $M_x \cdot \theta$ gives the relevance of item ranked at the top by $x$. The global observable set $\sigma$ is the set of any $n$ actions, where each action, in turn, puts a distinct item on top. Hence, $M_{\sigma}$ is the $n \times n$ dimensional permutation matrix.  Assumption 1 is satisfied because the reward function is linear in $\theta$ and $\bar{r} (x, \theta^*_p)= f(x) \cdot \theta^*_p$, where $\E_p [\theta]= \theta^*_p \in [0,1]^n$. Assumption 2 is satisfied since there will always be a global observable set of size $n$ and can be found easily. In fact, there will be multiple global observable sets, with the freedom to choose any one of them. Assumption 3 is satisfied due to the expected reward function being linear in second argument. The Lipschitz constant is $\max_{x \in \mathcal{X}}\|f(x)\|_2$, which is always less than some small polynomial factor of $n$, depending on specific $f(\cdot)$. The value of $\beta_{\sigma}$ can be easily seen to be $n^{3/2}$. The $\argmax$ oracle returns the permutation which simply sorts items according to their corresponding $\theta$ values. The arg-secondmax oracle is more complicated, though feasible. It requires first sorting the items according to $\theta$ and then compare each pair of consecutive items to see where least drop in reward value occurs. 

{\bf Likely Failure of Unique Optimal Action Assumption.} Assumption 4 is unlikely to hold in this problem setting. The mean relevance vector $\theta^*_p$ effectively reflects the average preference of all users for each of the $n$ items. It is very likely that at least a few items will not be liked by anyone and which will ultimately be always ranked at the bottom. Equally possible is that two items will have same user preference on average, and can be exchanged without hurting the optimal ranking. Thus, existence of an unique optimal ranking, which indicates that each item will have different average user preference than every other item, is unlikely. Thus, PEGE algorithm can still be applied to get poly-logarithmic regret (Theorem~\ref{thm:distributiondependentregret1}), but GCB will only achieve $O(T^{2/3}\log T)$ regret. 

{\bf A PM Game with Infinite Learner Action Space.} We give a simple modification of the ranking problem above to show how the learner can have continuous action space. The learner now ranks the items by producing an $n$ dimensional score vector $x \in [0,1]^n$ and sorting items according to their scores. Thus the learner's action space is now an uncountably infinite continuous space. As before, the user gets to see the ranked list and gives relevance feedback on top ranked item. The learner's performance will now be judged by a continuous loss function, instead of a discrete-valued ranking measure, since its moves are in a continuous space. Consider the simplest loss, viz., the squared ``loss'' $r(x, \theta)= -\|x- \theta\|^2_2$ (note -ve sign to keep reward interpetation). It can be easily seen that $\bar{r}(x, \theta^*_p) = E_{\theta \sim p} [r(x,\theta)] = -\|x\|_2^2+2 x \cdot \theta^*_p-\mathbf{1} \cdot \theta^*_p$,  if the relevance vectors $\theta$ are in $\{0,1\}^n$. Thus, the Lipschitz condition is satisfied. The global observable set is still of size $n$, with the $n$ actions being any $n$ score vectors, whose sorted orders place each of the $n$ items, in turn, on top. $\beta_{\sigma}$ remains same as before, with $\argmax_{x} \E_{\theta\sim p}r(x, \theta)= \E_{\theta\sim p}[\theta] = \theta^*_p$. 


\bibliographystyle{unsrt}
\bibliography{PM}

\newpage

\section{Appendix}
\label{appendix}

We first state the large deviation inequality for vector-valued martingales, which is the generalization of Azuma-Hoeffding inequality for scalar valued martingales.

{\bf Theorem 1.8 of \cite{hayes2005large}}: Let $X_0, X_1, \ldots, X_m$ be a weak martingale sequence taking values in euclidean space $\mathbb{R}^d$, with $\E[X_i | X_{i-1}] = X_{i-1}$. Let $X_0=0$ and $\|X_i -X_{i-1}\|_2 \le 1$, for $i=1,2,\ldots,m$. Then, for every $\epsilon>0$,
\begin{equation}
\Pr [\|X_m\|_2 \ge \epsilon] < 2 e^2 e^{\frac{- \epsilon^2}{ m}}
\end{equation}

We use the concentration inequality to get a uniform confidence bound, over the space of learner's action, on the deviation of estimated reward from true reward, after each estimate of mean reward vector is produced.

\begin{lemma}
\label{lem:conc-bound}
At the end of exploration phase within phase $b$, $b=1,2,\ldots$, of Algorithm PEGE,  the estimator of reward vector $\theta^*_p$ is $\hat{\theta}(b)=  \dfrac{\sum_{i=1}^b \sum_{j=1}^{i^{\beta}} \tilde{\theta}_{j,i} }{ \sum_{j=1}^b j^{\beta}}$. Then, $\forall \ \eta>0$, 
\begin{equation}
\label{eq:conc-bound}
 \Pr[\forall \ x \in \mathcal{X}: |\bar{r}(x, \hat{\theta}(b))- \bar{r}(x, \theta^*_p)| \le \eta] \ge 1- 2 e^2 e^{\frac{-(\sum_{i=1}^{b^{\beta}} i^{\beta}) \eta^2}{R^2 \beta_{\sigma}^2}}
\end{equation}

where $\beta_{\sigma}$ is the constant as defined in Eq.~\ref{eq:global-bound} and $R$ is the Lipschitz constant defined in Assumption 3.
\end{lemma}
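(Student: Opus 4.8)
The plan is to reduce the uniform-over-$\mathcal{X}$ statement to a single $\ell_2$ deviation bound on $\hat{\theta}(b)$ and then invoke the vector Azuma--Hoeffding inequality (Theorem 1.8 of \cite{hayes2005large}) quoted above. The conceptual crux is that the Lipschitz property (Assumption 3) converts \emph{any} bound on $\|\hat{\theta}(b) - \theta^*_p\|_2$ into one holding simultaneously for every $x \in \mathcal{X}$, so there is no union bound over the (possibly exponential or infinite) action space. This is precisely what keeps the bound free of $|\mathcal{X}|$.

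First I would re-index the $N_b := \sum_{i=1}^b i^{\beta}$ individual estimates $\tilde{\theta}_{j,i}$ produced during the exploration of phases $1,\ldots,b$ by a single index $k=1,\ldots,N_b$, in the order they are generated, and form the normalized centered partial sums
\[
S_m = \frac{1}{\beta_{\sigma}} \sum_{k=1}^m \left( \tilde{\theta}_k - \theta^*_p \right), \qquad S_0 = 0 .
\]
I would then check the two hypotheses of Hayes' inequality. For the martingale condition: each $\tilde{\theta}_k = M_{\sigma}^{+}(\ldots)$ is a linear image of feedback generated from an i.i.d.\ draw $\theta(t) \sim p$, so $\E_p[\tilde{\theta}_k \mid \text{past}] = M_{\sigma}^{+} M_{\sigma}\, \theta^*_p = \theta^*_p$ regardless of the history, making $(S_m)$ a (weak) martingale with mean-zero increments. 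For the bounded-increment condition: Eq.~\eqref{eq:global-bound} already gives $\|\tilde{\theta}_k - \theta^*_p\|_2 \le \beta_{\sigma}$, hence $\|S_m - S_{m-1}\|_2 \le 1$, as required.

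Next I would apply the inequality to $S_{N_b}$. Since the denominator in the definition of $\hat{\theta}(b)$ equals $N_b$, one has $S_{N_b} = (N_b/\beta_{\sigma})(\hat{\theta}(b) - \theta^*_p)$, so $\|S_{N_b}\|_2 = (N_b/\beta_{\sigma})\|\hat{\theta}(b)-\theta^*_p\|_2$. Choosing the threshold $\epsilon = (N_b/\beta_{\sigma})(\eta/R)$ turns $\Pr[\|S_{N_b}\|_2 \ge \epsilon] < 2 e^2 e^{-\epsilon^2/N_b}$ into
\[
\Pr\!\left[ \|\hat{\theta}(b) - \theta^*_p\|_2 \ge \eta/R \right] < 2 e^2 \exp\!\left( - \frac{N_b\, \eta^2}{R^2 \beta_{\sigma}^2} \right) .
\]
Finally, on the complementary event $\|\hat{\theta}(b) - \theta^*_p\|_2 < \eta/R$, Assumption 3 yields $|\bar{r}(x,\hat{\theta}(b)) - \bar{r}(x,\theta^*_p)| \le R \|\hat{\theta}(b) - \theta^*_p\|_2 < \eta$ for \emph{every} $x$ at once, which gives the claimed lower bound on the probability (with $N_b = \sum_{i=1}^b i^{\beta}$ the total exploration sample count through phase $b$).

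I expect the only genuine subtlety to be the martingale set-up: one must ensure the conditional unbiasedness $\E_p[\tilde{\theta}_k \mid \text{past}] = \theta^*_p$ holds at \emph{every} step (it does, because the draws are i.i.d.\ and $M_{\sigma}^{+} M_{\sigma} = \mathbb{I}$), and that the single constant $\beta_{\sigma}$ uniformly bounds all increments, so that the common scaling factor $N_b/\beta_{\sigma}$ cleanly relates $S_{N_b}$ to $\hat{\theta}(b)$. Everything after the application of Hayes' inequality is routine rescaling, and the replacement of a union bound by the Lipschitz step is what carries the uniformity over $\mathcal{X}$.
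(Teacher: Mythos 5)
Your proposal is correct and follows essentially the same route as the paper's proof: form the centered, normalized partial-sum martingale of the $\tilde{\theta}$ estimates, bound its increments via the $\beta_\sigma$ bound from Eq.~\eqref{eq:global-bound}, apply Hayes' vector-valued Azuma inequality, and convert the $\ell_2$ deviation of $\hat{\theta}(b)$ into a uniform-over-$\mathcal{X}$ reward deviation through the Lipschitz assumption (the only cosmetic difference being that you normalize by $\beta_\sigma$ to get unit increments while the paper normalizes by $N_b\beta_\sigma$ and rescales the inequality accordingly). Your $N_b=\sum_{i=1}^b i^\beta$ is the intended quantity in the exponent; the upper limit $b^\beta$ in the lemma's displayed bound appears to be a typo.
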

 
 \begin{proof}
 Let $\{X_{i,j}\}_{\underset{i=1,\ldots, b}{j=1,\ldots, i^{\beta}}}$ be a sequence of random vectors, defined as follows:
 \begin{equation}
 X_{i,j}= \dfrac{\sum_{i'=1}^ {i-1} \sum_{j'=1}^{{i'}^{\beta}} \theta^*_p + \sum_{k=1}^j \theta^*_p - (\sum_{i'=1}^ {i-1} \sum_{j'=1}^{{i'}^{\beta}} \tilde{\theta}_{j',i'} + \sum_{k=1}^j \tilde{\theta}_{k,i})}{\sum_{i''=1}^ b \sum_{j''=1}^{(i'')^{\beta}} \beta_{\sigma}}
 \end{equation}
It can be checked that the $\ell_2$ norm of the difference between any two consecutive random vectors is bounded by a constant. That is, $\|X_{i,j}- X_{i, j-1}\|_2 = \dfrac{\|\theta^*_p - \tilde{\theta}_{j,i}\|_2}{\sum_{i''=1}^ b \sum_{j''=1}^{(i'')^{\beta}} \beta_{\sigma}} \le \dfrac{1}{\sum_{i''=1}^ b (i'')^{\beta}}$ and $\|X_{i+1, 1} - X_{i, i^{\beta}}\|_2 \dfrac{\|\theta^*_p - \tilde{\theta}_{1,i+1}\|_2}{\sum_{i''=1}^ b \sum_{j''=1}^{(i'')^{\beta}} \beta_{\sigma}} \le \dfrac{1}{\sum_{i''=1}^ b (i'')^{\beta}}$. 

Also, $\tilde{\theta}_{j,i}$ is independent of all estimators formed before $\tilde{\theta}_{j,i}$ in Algorithm PEGE. Thus,
\begin{equation}
\begin{aligned}
\E[X_{i,j} - X_{i,j-1}| X_{i,j-1}]  & = \E \left[\frac{\theta^*_p - \tilde{\theta}_{j,i}}{\sum_{i''=1}^ b \sum_{j''=1}^{(i'')^{\beta}} \beta_{\sigma}}|X_{i,j-1} \right]\\
& = \E \left[\frac{\theta^*_p - \tilde{\theta}_{j,i}}{\sum_{i''=1}^ b \sum_{j''=1}^{(i'')^{\beta}} \beta_{\sigma}} \right] = 0
\end{aligned}
\end{equation}

Thus, $\{X_{i,j}\}_{\underset{i=1,\ldots, b}{j=1,\ldots, i^{\beta}}}$ satisfy the criteria of weak martingale sequence and hence, by the large deviation inequality of vector valued martingales, we have, $\forall \ \epsilon>0$, $\Pr[\|X_{b,b^{\beta}}\|_2 \ge \epsilon]< 2 e^2 e^{\left(\frac{- \epsilon^2}{ \frac{\sum_{i=1}^b \sum_{j=1}^{i^{\beta}}1}{(\sum_{i=1}^b i^{\beta})^2}}\right)}= 2 e^2 e^{-\epsilon^2 \sum_{i=1}^b i^{\beta}}$.\\

Now, it can be clearly seen that $\|\theta^*_p -\hat{\theta}(b)\|_2 = \beta_{\sigma} \|X_{b,b^{\beta}}\|_2$ and let $\eta= \beta_{\sigma} \epsilon$. Then, $\forall \ \eta>0$, we get $\Pr[ \|\theta^*_p - \hat{\theta}(b)\|_2 \ge \eta] \le 2 e^2 e^{\frac{-(\sum_{i=1}^{b^{\beta}} i^{\beta}) \eta^2}{\beta_{\sigma}^2}}$.

Using the Lipschitz property of expected reward function (Assumption 3), we have 

\begin{equation}
\begin{aligned}
\Pr(\exists\ x \in \mathcal{X}: |\bar{r}(x, \hat{\theta}(b))- \bar{r}(x, \theta^*_p)| \ge \eta) & \le \Pr ( R \cdot  \|\theta^*_p - \hat{\theta}(b)\|_2 \ge \eta)\\
& \le 2 e^2 e^{\frac{-(\sum_{i=1}^{b^{\beta}} i^{\beta}) \eta^2}{R^2 \beta_{\sigma}^2}}
\end{aligned}
\end{equation}

Taking complement of the event completes the proof.
 \end{proof}
 
\subsection{Proof of Results in Section \ref{algorithmanalysis}}

\subsubsection{Proof of Theorem~\ref{thm:distributionindependentregret}}

We first restate the theorem. 

{\bf Distribution Independent Regret:} When Algorithm PEGE is initialized with the parameters $C(a)= \log a$, $\alpha=1/2$ and $\beta=0$, and the online game is played over $T$ rounds, we get the following bound on expected regret:
\begin{equation}
R(T) \le R_{max} |\sigma| T^{2/3} + 2 R \beta_{\sigma} T^{2/3} \sqrt{ \log 2e^2  + 2 \log T}  + R_{max}
\end{equation}
where $\beta_{\sigma}$ is the constant as defined in Eq.~\ref{eq:global-bound}.

\begin{proof}

Let Algorithm PEGE run for $K$ phases, with parameters initialized as $C(a)= \log a$, $\alpha=1/2$ and $\beta=0$.

{\bf Exploration regret}: During every exploration phase, the expected regret is bounded by $|\sigma| R_{max}$, where $R_{max}$ is as given in Assumption 3. Thus, total expected regret due to exploration is $K |\sigma| R_{max}$.\\

{\bf Exploitation regret}: Let $x^* \in \argmax_{x \in \mathcal{X}} \bar{r}(x, \theta^*_p)$ and $x(b) \in \argmax_{x \in \mathcal{X}} \bar{r}(x, \hat{\theta}(b))$. During every exploitation round within phase $b$ of Algorithm PEGE, the expected regret is $|\bar{r}(x(b), \theta^*_p) - \bar{r}(x^*, \theta^*_p)|$ .

Now, from Lemma~\ref{lem:conc-bound}, with $\beta=0$, the following holds w.p. $\ge 1 - \delta_b$, 
\begin{equation}
\label{eq:deviation}
\forall \ x, \ |\bar{r}(x,\theta^*_p) - \bar{r}(x, \hat{\theta}(b)| \le \underbrace{\sqrt{\dfrac{R^2 \beta_{\sigma}^2 \log (\frac{2e^2}{\delta_b})}{b}}}_{\eta_b}
\end{equation}

Then, w.p. $\ge 1-  \delta_b$, the following event holds true: $|\bar{r}(x^*, \theta^*_p) - \bar{r}(x(b), \theta^*_p)| \le 2 \eta_b$, as explained:
\begin{equation*}
\begin{aligned}
\bar{r}(x^*, \theta^*_p) & \le \bar{r}(x^*, \hat{\theta}(b)) + \eta_b \ \ \text{from Eq.}~\ref{eq:deviation} \\
& \le \bar{r}(x(b), \hat{\theta}(b)) + \eta_b\\
& \le \bar{r}(x(b), \theta^*_p) + 2 \eta_b \ \ \text{from Eq.}~\ref{eq:deviation}
\end{aligned}
\end{equation*}

Thus, the event $|\bar{r}(x^*, \theta^*_p) - \bar{r}(x(b), \theta^*_p)| \le 2 \eta_b$ holds true w.p. $\ge 1-  \delta_b$, for every fixed phase $b$. Then, w.p. $\ge 1- \sum_{i=1}^K \delta_i$, the following holds true:
\begin{equation*}
\forall \ b,  |\bar{r}(x^*, \theta^*_p) - \bar{r}(x(b), \theta^*_p)| \le 2 \eta_b
\end{equation*}

Note that the expected regret per round is always bounded by $R_{\max}$ (since expected reward is bounded by $R_{max}$) .


The number of rounds of exploitation in phase $b$ is $b^{\alpha}$. Hence, the total expected regret due to exploitation, over $K$ phases is:

\begin{equation*}
\sum_{i=1}^K \left(\underbrace{ (1-  \sum_{j=1}^K \delta_j) \frac{2 R \beta_{\sigma} \sqrt{\log(2e^2/\delta_i)}}{\sqrt{i}} + (\sum_{j=1}^K \delta_j) R_{\max}}_{\text{expected regret per exploitation round}}  \right) i^{\alpha}
\end{equation*}

Taking $\delta_1=\delta_2= \ldots= \delta_K= \delta$,  and summing over exploration and exploitation regret over $K$ phases, we get

\begin{equation}
R(T) \le K |\sigma| R_{max} + \sum_{i=1}^K \left( (1- K \delta) \frac{2 R \beta_{\sigma} \sqrt{\log(2e^2/\delta)}}{\sqrt{i}} + (K \delta) R_{\max}  \right) i^{\alpha}
\end{equation}\\

Using the inequality $\sum_{i=1}^K i^y \le \int_{0}^{K} i^y dy \le K^{y+1}$, we get expected regret:
\begin{equation}
\label{eq:regret}
R(T) \le K |\sigma| R_{max} +   (1- K\delta) 2 R \beta_{\sigma} \sqrt{\log(2e^2/\delta)} K^{\alpha + 1/2} +  K\delta R_{\max} K^{\alpha + 1}
\end{equation}

Now, we relate $K$ to total time $T$ as:  $T=  |\sigma| K + \sum_{i=1}^K i^{\alpha} \sim K^{\alpha+1}$, for large $K$. 

Hence $ K \sim T^{\frac{1}{1+ \alpha}}$. Substituting value of $K$ in Eq~\ref{eq:regret}, and taking $\alpha={1/2}$ and $\delta= \frac{1}{KT}$ gives us the required bound on expected regret. 
\end{proof}

Our next lemma shows that as the number of phases $b$ grows in Algorithm PEGE, the probability of selecting a sub-optimal arm for greedy exploitation shrinks.
 
 \begin{lemma}
 \label{lem:suboptimal-arm}
 At the end of exploration phase within phase $b$, $b=1,2,\ldots$, the estimator constructed is $\hat{\theta}(b)=  \dfrac{\sum_{i=1}^b \sum_{j=1}^{i^{\beta}} \tilde{\theta}_{j,i} }{ \sum_{j=1}^b j^{\beta}}$. Then the following holds,
 \begin{equation}
 \Pr ( \argmax_{x \in \mathcal{X}} \bar{r}(x, \hat{\theta}(b)) \not\subseteq \argmax_{x \in \mathcal{X}} \bar{r}(x, \theta^*_p)) \le 2 e^2 e^{\frac{-(\sum_{i=1}^{b^{\beta}} i^{\beta}) \Delta^2}{4 R^2 \beta_{\sigma}^2}}
 \end{equation}
 \end{lemma}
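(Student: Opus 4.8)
The plan is to reduce the combinatorial ``wrong $\argmax$'' event to a failure of the uniform reward-concentration bound of Lemma~\ref{lem:conc-bound}, instantiated at the confidence width $\eta = \Delta/2$. The exponent in the claim then matches automatically because $(\Delta/2)^2 = \Delta^2/4$, and the martingale sum $\sum_i i^\beta$ carries over verbatim from that lemma.

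First I would unpack what the bad event means. Fix $x^* \in \argmax_{x \in \mathcal{X}} \bar{r}(x,\theta^*_p)$ and suppose the greedy set $\argmax_{x} \bar{r}(x,\hat{\theta}(b))$ is \emph{not} contained in $\argmax_{x} \bar{r}(x,\theta^*_p)$, so that some selected $x(b) \in \argmax_{x} \bar{r}(x,\hat{\theta}(b))$ has $\Delta_{x(b)} > 0$. Here I would invoke the definition of $\Delta$ as the smallest \emph{strictly positive} gap: it forces $\bar{r}(x^*,\theta^*_p) - \bar{r}(x(b),\theta^*_p) = \Delta_{x(b)} \ge \Delta$. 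This is the crucial bridge from the set-inclusion event to a quantitative reward gap of size at least $\Delta$.

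Next I would reuse the sandwiching chain from the proof of Theorem~\ref{thm:distributionindependentregret}. On the good concentration event, which by the large-deviation bound underlying Lemma~\ref{lem:conc-bound} reads $|\bar{r}(x,\hat{\theta}(b)) - \bar{r}(x,\theta^*_p)| < \eta$ for every $x$, the fact that $x(b)$ maximizes $\bar{r}(\cdot,\hat{\theta}(b))$ gives
\[
\bar{r}(x^*,\theta^*_p) < \bar{r}(x^*,\hat{\theta}(b)) + \eta \le \bar{r}(x(b),\hat{\theta}(b)) + \eta < \bar{r}(x(b),\theta^*_p) + 2\eta ,
\]
so that $\Delta \le \bar{r}(x^*,\theta^*_p) - \bar{r}(x(b),\theta^*_p) < 2\eta$. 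Taking $\eta = \Delta/2$ makes this $\Delta < \Delta$, a contradiction; hence on the good event at width $\Delta/2$ every greedy maximizer is a genuine maximizer and the bad event cannot occur.

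Finally I would read off the bound: the bad event is contained in $\{\exists\, x : |\bar{r}(x,\hat{\theta}(b)) - \bar{r}(x,\theta^*_p)| \ge \Delta/2\}$, whose probability Lemma~\ref{lem:conc-bound} (applied with $\eta = \Delta/2$) bounds by $2 e^2 \exp(-(\sum_{i} i^\beta)\Delta^2/(4R^2\beta_\sigma^2))$, exactly the claim. I expect the only point requiring care to be the boundary inequality at $\eta=\Delta/2$: one must use the \emph{strict} form of the good event (which is precisely what the underlying inequality $\Pr[\|X\|_2 \ge \epsilon] < 2e^2 e^{-\epsilon^2 m}$ delivers) together with the minimal-positive-gap definition of $\Delta$, so that $\Delta_{x(b)} \ge \Delta$ collides with $\Delta_{x(b)} < \Delta$ rather than merely touching it. Everything else is a direct reuse of the already-proved uniform concentration inequality, so I do not anticipate any further obstacle.
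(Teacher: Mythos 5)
Your proposal is correct and follows essentially the same route as the paper: reduce the event $\argmax_{x} \bar{r}(x,\hat{\theta}(b)) \not\subseteq \argmax_{x} \bar{r}(x,\theta^*_p)$ to the existence of some $x$ with $|\bar{r}(x,\hat{\theta}(b)) - \bar{r}(x,\theta^*_p)| \ge \Delta/2$ (via the minimal-positive-gap definition of $\Delta$ and the fact that the greedy choice maximizes $\bar{r}(\cdot,\hat{\theta}(b))$), then invoke Lemma~\ref{lem:conc-bound} with $\eta = \Delta/2$. The paper phrases this as ``at least one of the two deviations must exceed $\Delta/2$'' rather than as a contradiction on the good event, but the two formulations are logically identical.
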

 
 \begin{proof}
 Let us assume $x' \in \argmax_{x \in \mathcal{X}} \bar{r}(x, \hat{\theta}(b))$ such that $x' \notin \argmax_{x \in \mathcal{X}} \bar{r}(x, \theta^*_p)$. Let $x^* \in \argmax_{x \in \mathcal{X}} \bar{r}(x, \theta^*_p)$. Then, by our assumption, $\bar{r}(x', \hat{\theta}(b)) \ge \bar{r}(x^*, \hat{\theta}(b))$. By definition of gap $\Delta$, we also have $\bar{r}(x^*, \theta^*_p) - \bar{r}(x', \theta^*_p) \ge \Delta$. The two inequalities imply that at least one of the following two inequalities has to hold: either $|\bar{r}(x^*, \theta^*_p)- \bar{r}(x^*, \hat{\theta}(b))| \ge \frac{\Delta}{2}$ or $|\bar{r}(x', \hat{\theta}(b))- \bar{r}(x', \theta^*_p)| \ge \frac{\Delta}{2}$.
 
 Thus, $\argmax_{x \in \mathcal{X}} \bar{r}(x, \hat{\theta}(b)) \not\subseteq \argmax_{x \in \mathcal{X}} \bar{r}(x, \theta^*_p) \implies \exists \ x \in \mathcal{X}: |\bar{r}(x, \theta^*_p)- \bar{r}(x, \hat{\theta}(b))| \ge \frac{\Delta}{2}$. By using Lemma~\ref{lem:conc-bound}, and substituting $\eta= \frac{\Delta}{2}$, we get our result.
 \end{proof}

\subsubsection{Proof of Theorem~\ref{thm:distributiondependentregret1}}

We restate the theorem before proving:

{\bf Distribution Dependent Regret:} When Algorithm PEGE is initialized with the parameters $C(a)= h \cdot a$, for a tuning parameter $h>0$, $\alpha=1$ and $\beta=1$, and the online game is played over $T$ rounds, we get the following bound on expected regret:
\begin{equation}
R(T) \le \sum_{x \in \sigma} \Delta_x \left(\dfrac{\log T}{h} \right)^2 + \dfrac{4 \sqrt{2 \pi} e^2 R \Delta_{max} \beta_{\sigma}}{\Delta} e^{\frac{h^2 (2 R^2 \beta^2_{\sigma})}{\Delta^2}} .
\end{equation}
 
 \begin{proof}
 
Let total number of phases that the algorithm runs for be $K$. We relate $K$ to total time $T$ as (after substituting parameters $C(a)= h \cdot a$, $\alpha=1$ and $\beta=1$ in Algorithm PEGE):

$T = \sum_{i=1}^K |\sigma| i + \sum_{i=1}^K e^{h  i} \ge e^{h  K}$ $\implies$ $ K \le \dfrac{\log T}{h}$.
 
{\bf Exploration regret}: Sine we are in distribution dependent setting now, expected exploration regret in each exploration phase is $\sum_{x \in \sigma} \Delta_x$.   Hence, total expected exploration regret is upper bounded by:

 $\sum_{i=1}^K (\sum_{ x \in \sigma} \Delta_x) i =  \sum_{ x \in \sigma} \Delta_x \frac{K(K+1)}{2} \le (\sum_{ x \in \sigma} \Delta_x) \dfrac{\log^2 T}{h^2}$.
 
{\bf Exploitation regret}: When a sub-optimal arm is picked in an exploitation round, the expected regret in that round is: $\le \Delta_{max}$. Using Lemma~\ref{lem:suboptimal-arm} with $\beta=1$, the total expected regret due to exploitation over $K$ phases is upper bounded by:
 \begin{equation}
 \begin{aligned}
 \sum_{i=1}^K  \underbrace{  2 e^2 \Delta_{max} \ e^{h i - \frac{i(i+1)}{2} \frac{\Delta^2}{4 R^2 \beta^2_{\sigma}}}}_{\text{expected exploitation regret upper bound in phase i}} &\le 2 e^2 \Delta_{max} \sum_{i=1}^{\infty} e^{h i - \frac{i(i+1)}{2} \frac{\Delta^2}{4 R^2 \beta^2_{\sigma}}}\\
 & \le 2 e^2 \Delta_{max} \int_{- \infty}^{\infty} e^{h y - \frac{y(y+1)}{2} \frac{\Delta^2}{4 R^2 \beta^2_{\sigma}}} dy
 \end{aligned}
 \end{equation}
 The integral is the moment generating function (adjusting for normalization constant) of a gaussian random variable $ Y \in \mathcal{N} (0, \frac{4 R^2 \beta^2_{\sigma}}{\Delta^2})$. Thus, the integral is $\E[e^{hY}]= e^{\frac{2 h^2 R^2 \beta^2_{\sigma}}{\Delta^2 }}$ and total expected regret due to exploitation is upper bounded by: $\dfrac{4 e^2 \Delta_{max} \sqrt{2 \pi} R \beta_{\sigma}}{\Delta} e^{\frac{2 h^2 R^2 \beta^2_{\sigma}}{\Delta^2 }}$.
 
 Summing over exploration and exploitation regrets completes the proof.
 \end{proof}
 
 \subsection{Proof of Results in Section~\ref{gap}}
 
The following theorem is about the version of PEGE that Algorithm~\ref{alg:combined} calls on line 8. It will be needed in the proof of Theorem~\ref{thm:combined}.

\begin{theorem}
\label{thm:distributiondependentregret2}
{\bf (Distribution Dependent Regret, version 2)} When Algorithm~\ref{alg:PhaseForced} is initialized with the parameters $C(a)= h \cdot a$, for a tuning parameter $0<h< \frac{\Delta^2}{4 R^2 \beta^2_{\sigma}}$, $\alpha=1$ and $\beta=0$, and the online game is played over $T$ rounds, we get the following bound on expected regret:
\begin{equation}
R(T) \le \sum_{x \in \sigma} \Delta_x \dfrac{\log T}{h} + \dfrac{2 e^2 \Delta_{max}}{\frac{\Delta^2}{4 R^2 \beta^2_{\sigma}} - h} 
\end{equation}
\end{theorem}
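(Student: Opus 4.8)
The plan is to follow the same exploration/exploitation decomposition used in the proof of Theorem~\ref{thm:distributiondependentregret1}, adapting every step to the new parameter setting $\beta=0$, $\alpha=1$, $C(a)=h\cdot a$. The only structural change from Theorem~\ref{thm:distributiondependentregret1} is that $\beta=0$ makes each exploration phase cost a \emph{constant} $|\sigma|$ rounds (rather than growing linearly), while exploitation in phase $b$ still lasts $\exp(C(b^\alpha))=e^{hb}$ rounds. As a consequence, the total sample count entering the concentration bound at phase $b$ is $\sum_{i=1}^b i^{\beta}=b$, so Lemma~\ref{lem:suboptimal-arm} specializes to a probability of greedy mis-selection bounded by $2e^2 e^{-b\Delta^2/(4R^2\beta_\sigma^2)}$. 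I would first record this specialization and the relation between the number of phases $K$ and the horizon $T$.

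For the phase count, substituting the parameters into Algorithm~\ref{alg:PhaseForced} gives $T=\sum_{i=1}^K |\sigma| + \sum_{i=1}^K e^{hi}\ge e^{hK}$, hence $K\le \frac{\log T}{h}$. The exploration regret is then immediate: each exploration phase plays every $x\in\sigma$ once, incurring expected regret $\sum_{x\in\sigma}\Delta_x$, so over $K$ phases the exploration regret is at most $K\sum_{x\in\sigma}\Delta_x\le \frac{\log T}{h}\sum_{x\in\sigma}\Delta_x$, matching the first term of the claimed bound.

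The heart of the proof is the exploitation regret. In phase $b$ there are $e^{hb}$ exploitation rounds, each contributing at most $\Delta_{\max}$ regret when a suboptimal action is greedily chosen, which happens with probability at most $2e^2 e^{-b\Delta^2/(4R^2\beta_\sigma^2)}$ by Lemma~\ref{lem:suboptimal-arm}. Writing $c=\frac{\Delta^2}{4R^2\beta_\sigma^2}-h$, the per-phase expected exploitation regret is therefore bounded by $2e^2\Delta_{\max}\,e^{-bc}$. The assumption $0<h<\frac{\Delta^2}{4R^2\beta_\sigma^2}$ guarantees $c>0$, so summing over all phases and extending to a full geometric series yields $\sum_{b=1}^{\infty} 2e^2\Delta_{\max} e^{-bc}=\frac{2e^2\Delta_{\max}}{e^{c}-1}$. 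I would then invoke the elementary inequality $e^{c}-1\ge c$ (valid for all $c>0$) to simplify this to $\frac{2e^2\Delta_{\max}}{c}=\frac{2e^2\Delta_{\max}}{\frac{\Delta^2}{4R^2\beta_\sigma^2}-h}$, the second term of the bound. Adding the two contributions completes the argument.

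The main obstacle is conceptual rather than computational: recognizing that, unlike the $\beta=1$ case of Theorem~\ref{thm:distributiondependentregret1} where the phase-$b$ exploitation count $e^{hb}$ competes against a \emph{quadratically} decaying failure probability $e^{-b(b+1)\Delta^2/(8R^2\beta_\sigma^2)}$ (forcing the Gaussian-MGF evaluation), here the $\beta=0$ choice produces only a \emph{linear} exponent $-b\,\Delta^2/(4R^2\beta_\sigma^2)$. This turns the exploitation sum into a plain geometric series, and the series converges precisely when $h$ stays below the threshold $\frac{\Delta^2}{4R^2\beta_\sigma^2}$. The delicate point is thus ensuring this strict inequality is what drives convergence and accounts for the blow-up of the additive constant as $h$ approaches the threshold; the bound $e^c-1\ge c$ is what lets the final constant be stated cleanly without an exponential factor.
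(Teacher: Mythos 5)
Your proposal is correct and follows essentially the same route as the paper: the same $K \le \log T/h$ phase count, the same $K\sum_{x\in\sigma}\Delta_x$ exploration bound, and the same application of Lemma~\ref{lem:suboptimal-arm} with $\beta=0$ turning the exploitation sum into a geometric series convergent precisely when $h<\frac{\Delta^2}{4R^2\beta_\sigma^2}$. The only (immaterial) difference is that you evaluate the geometric series in closed form and apply $e^c-1\ge c$, whereas the paper bounds the same sum by the integral $\int_0^\infty e^{-yc}\,dy = 1/c$; both yield the identical constant.
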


{\bf Note}: Compared to Theorem \ref{thm:distributiondependentregret1}, the regret bound has better dependence on $T$ --- $O(\log T)$ instead of $O(\log^2 T)$ --- but it also has a disadvantage.
If the tuning parameter $h$ is incorrectly set, say $h\ge \frac{\Delta^2}{4 R^2 \beta^2_{\sigma}}$, then the bound does not even apply.

\begin{proof}
{\bf Key Steps}: 
 
 Let total number of phases that the algorithm runs for be $K$.
 First: $T = \sum_{i=1}^K |\sigma|  + \sum_{i=1}^K e^{h  i} \ge e^{h  K}$ $\implies$ $ K \le \dfrac{\log T}{h}$.
 
 Expected regret due to exploration: $\sum_{i=1}^K (\sum_{ x \in \sigma} \Delta_x)  =  \sum_{ x \in \sigma} \Delta_x K \le (\sum_{ x \in \sigma} \Delta_x) \dfrac{\log T}{h}$.
 
 Expected regret due to exploitation: When a sub-optimal arm is picked, expected regret $\le \Delta_{max}$. Using Lemma~\ref{lem:suboptimal-arm} with $\beta=0$, and tuning parameter $h < \frac{\Delta^2}{4 R^2 \beta^2_{\sigma}}$,  we get total expected regret due to exploitation
 \begin{equation}
 \begin{aligned}
 2 e^2 \Delta_{max} \sum_{i=1}^K e^{h i - i \frac{\Delta^2}{4 R^2 \beta^2_{\sigma}}} &\le 2 e^2 \Delta_{max} \sum_{i=1}^{\infty} e^{h i - i \frac{\Delta^2}{4 R^2 \beta^2_{\sigma}}}\\
 & = 2 e^2 \Delta_{max} \sum_{i=1}^{\infty} e^{-i (\frac{\Delta^2}{4 R^2 \beta^2_{\sigma}} - h)}\\
 & \le 2 e^2 \Delta_{max} \int_{0}^{\infty} e^{-y (\frac{\Delta^2}{4 R^2 \beta^2_{\sigma}} - h)}dy\\
 & = \dfrac{2 e^2 \Delta_{max}}{\frac{\Delta^2}{4 R^2 \beta^2_{\sigma}} - h}
 \end{aligned}
 \end{equation}
\end{proof}

\subsubsection{Proof of Theorem~\ref{thm:gap}}

\begin{proof}
Note that Assumption 1 through Assumption 3 hold.
Therefore, from Lemma~\ref{lem:conc-bound}, with $\beta=0$ we get, with probability at least $1 - \delta_b$, 
\begin{equation*}
\forall  x, \ |\bar{r}(x,\theta^*_p) - \bar{r}(x, \hat{\theta}(b)| \le \sqrt{\dfrac{R^2 \beta_{\sigma}^2 \log (\frac{2e^2}{\delta_b})}{b}}
\end{equation*}
Let $\delta_b = \delta/2b^2$ which implies $\sum_{b \ge 1} \delta_b = \pi^2 \delta /12 < \delta$. Thus, setting
$w(b) = \sqrt{\dfrac{R^2 \beta_{\sigma}^2 \log (\frac{4e^2b^2}{\delta})}{b}} ,$
the event $E$ defined as
\begin{equation}\label{eq:boundshold}
\forall b \ge 1, \forall x \in \mathcal{X}, |\bar{r}(x,\hat{\theta}(b)) - \bar{r}(x,\theta^*_p) | \le w(b) .
\end{equation}
holds with probability at least $1-\delta$.

\begin{enumerate}
\item
Note that $b \ge T_1(\delta)$ implies $8w(b) < \Delta$. This is because the latter has the form $e^{L b} > M b$ with $M = 2e/\delta$ and $L = \Delta^2/(128R^2\beta_\sigma^2)$. Setting $b \ge 2/L \log(2M/L)$ guarantees
that $e^{Lb/2} \ge 2M/L$ which implies that $e^{Lb} \ge Mb$ since $e^{Lb/2} \ge 1 + Lb/2 \ge Lb/2$. 

If $8w(b) < \Delta$ then clearly $2w(b) < \Delta$. Let $x \neq x^*$ be arbitrary. We have the following chain of implications:
\begin{align*}
&& 2w(b) &< \Delta \\
&\Rightarrow& 2w(b) &< \bar{r}(x^*, \theta^*_p) - \bar{r}(x^{*}_{-}, \theta^*_p) && \text{(def. of $\Delta$)}\\
&\Rightarrow& 2w(b) &< \bar{r}(x^*, \theta^*_p) - \bar{r}(x, \theta^*_p) && \text{(Assumption 4)}\\
&\Rightarrow& 0 &< \bar{r}(x^*, \hat{\theta}(b)) - \bar{r}(x, \hat{\theta}(b)) . && (\because E \text{ holds}) 
\end{align*}
This means that the If condition on line 12 will evaluate to true and $\hat{x}(b)$ on line 13 will be set to $x^*$.

We also have the following chain of implications:
\begin{align*}
&& 8w(b) &< \Delta \\
&\Rightarrow& 8w(b) &< \bar{r}(x^*, \theta^*_p) - \bar{r}(x^{*}_{-}, \theta^*_p) && \text{(def. of $\Delta$)}\\
&\Rightarrow& 8w(b) &< \bar{r}(x^*, \theta^*_p) - \bar{r}(\hat{x}_{-}(b), \theta^*_p) && (\because \bar{r}(\hat{x}_{-}(b), \theta^*_p) \le \bar{r}(x^{*}_{-}, \theta^*_p)) \\
&\Rightarrow& 8w(b) &< \bar{r}(\hat{x}(b), \theta^*_p) - \bar{r}(\hat{x}_{-}(b), \theta^*_p) && (\because \hat{x}(b) = x^*) \\
&\Rightarrow& 6w(b) &< \bar{r}(\hat{x}(b), \hat{\theta}(b)) - \bar{r}(\hat{x}_{-}(b), \hat{\theta}(b)) . && (\because E \text{ holds}) 
\end{align*}
This means that the If condition on line 15 will evaluate to true and the algorithm will stop and output an estimate $\hat{\Delta}$.

Now suppose the algorithm stops and does not output ``threshold exceeded'' which means that the If conditions on line 12 and line 15 were both true at some episode $b$. 
Let $x \neq \hat{x}(b)$ be arbitrary. We have the following chain of implications:
\begin{align*}
&& 6w(b) &< \bar{r}(\hat{x}(b), \hat{\theta}(b)) - \bar{r}(\hat{x}_{-}(b), \hat{\theta}(b)) && \text{(line 15)}\\
&\Rightarrow& 6w(b) &< \bar{r}(\hat{x}(b), \hat{\theta}(b)) - \bar{r}(x, \hat{\theta}(b)) && \text{($\hat{x}(b)$ unique maximizer by line 12)}\\
&\Rightarrow& 4w(b) &< \bar{r}(\hat{x}(b), \theta^*_p) - \bar{r}(x, \theta^*_p) . && (\because E \text{ holds}) 
\end{align*}
This means, along with Assumption 4, that $\hat{x}(b) = x^*$. We also have,
\begin{align*}
&& 6w(b) &< \bar{r}(\hat{x}(b), \hat{\theta}(b)) - \bar{r}(\hat{x}_{-}(b), \hat{\theta}(b)) && \text{(line 15)}\\
&\Rightarrow& 6w(b) &< \bar{r}(\hat{x}(b), \hat{\theta}(b)) - \bar{r}(x^*_{-}, \hat{\theta}(b)) && (\because \bar{r}(\hat{x}_{-}(b), \hat{\theta}(b)) \ge \bar{r}(x^{*}_{-}, \hat{\theta}(b))) \\
&\Rightarrow& 4w(b) &< \bar{r}(\hat{x}(b), \theta^*_p) - \bar{r}(x^*_{-}, \theta^*_p) && (\because E \text{ holds}) \\
&\Rightarrow& 4w(b) &< \bar{r}(x^*, \theta^*_p) - \bar{r}(x^*_{-}, \theta^*_p) && (\because \hat{x}(b) = x^*) \\
&\Rightarrow& 4w(b) &< \Delta . && \text{(def. of $\Delta$)} 
\end{align*}

Now we prove that the output $\hat{\Delta}$ lies in the right range. We have
\begin{align*}
\hat{\Delta} &= \bar{r}(\hat{x}(b), \hat{\theta}(b)) - \bar{r}(\hat{x}_{-}(b), \hat{\theta}(b)) && \text{(line 16)} \\
 &\ge \bar{r}(\hat{x}(b), \theta^*_p) - \bar{r}(\hat{x}_{-}(b), \theta^*_p) - 2w(b) && (\because E \text{ holds}) \\
 &= \bar{r}(x^*, \theta^*_p) - \bar{r}(\hat{x}_{-}(b), \theta^*_p) - 2w(b) && (\because \hat{x}(b) = x^*) \\
 &\ge \bar{r}(x^*, \theta^*_p) - \bar{r}(x^*_{-}, \theta^*_p) - 2w(b) && (\because \bar{r}(\hat{x}_{-}(b), \theta^*_p) \le \bar{r}(x^{*}_{-}, \theta^*_p)) \\
 &\ge \Delta - 2 w(b) && \text{(def. of $\Delta$)} \\
 &\ge \frac{\Delta}{2} . && (\because w(b) < \Delta/4) 
\end{align*}
Similarly,
\begin{align*}
\hat{\Delta} &= \bar{r}(\hat{x}(b), \hat{\theta}(b)) - \bar{r}(\hat{x}_{-}(b), \hat{\theta}(b)) && \text{(line 16)} \\
 &\le \bar{r}(\hat{x}(b), \hat{\theta}(b)) - \bar{r}(x^*_{-}, \hat{\theta}(b))  && (\because \bar{r}(\hat{x}_{-}(b), \hat{\theta}(b)) \ge \bar{r}(x^{*}_{-}, \hat{\theta}(b))) \\
 &= \bar{r}(x^*, \hat{\theta}(b)) - \bar{r}(x^*_{-},\hat{\theta}(b))  && (\because \hat{x}(b) = x^*) \\
 &\le \bar{r}(x^*, \theta^*_p) - \bar{r}(x^*_{-}, \theta^*_p) + 2w(b) && (\because E \text{ holds}) \\
 &\le \Delta + 2 w(b) && \text{(def. of $\Delta$)} \\
 &\le \frac{3\Delta}{2} . && (\because w(b) < \Delta/4) 
\end{align*}

\item
In this case $T_0 \le T_1(\delta)$ but it could still be that the algorithm stops not because the threshold is exceeded but because line 12 and line 15 were true
at some episode $b$. Clearly $b < T_0$, otherwise we would have output ``threshold exceeded'' and not produced an estimate $\hat{\Delta}$.
Under the event $E$, the previous part shows that if stopping occurs with an estimate $\hat{\Delta}$, it must be that $4w(b) < \Delta$, i.e.
\[
4  \sqrt{\dfrac{R^2 \beta_{\sigma}^2 \log (\frac{4e^2b^2}{\delta})}{b}}  < \Delta
\quad\Rightarrow
\quad
b > \frac{16 R^2 \beta_\sigma^2}{\Delta^2} \log \frac{4e^2}{\delta}  = T_2(\delta) .
\]
This means $T_0 > b > T_2(\delta)$.
\item
Finally, suppose Assumptions 1 through 3 hold but Assumption 4 fails. Event $E$ still holds with probability at least $1-\delta$. However, if there are
at least two optimal actions then, under $E$, their confidence intervals will always overlap and If condition on line 15 will never be true. That means that
the algorithm can only stop when the threshold $T_0$ is exceeded.
\end{enumerate}
\end{proof}

\subsubsection{Proof of Theorem~\ref{thm:combined}}

\begin{proof}
We break the proof into the two cases mentioned in the theorem statement.

{\bf Part 1: Assumption 4 fails.}
From Theorem \ref{thm:gap} we know, that with probability at least $1-\delta$, Algorithm~\ref{alg:gap} outputs ``threshold exceeded'' in this case.
Because of Eq.~\eqref{eq:boundshold}, we also have, for an optimal action $x^*$:
\[
|\bar{r}(\hat{x}(T_0), \theta^*_p) - \bar{r}(x^*,\theta^*_p)| \le 2 w(T_0)
\]
which implies a total regret of
\[
2 w(T_0) (T - T_0|\sigma|) \le 2 w(T_0) T
\]
in the remaining $T - T_0|\sigma|$ rounds since we execute line 5. The regret when Algorithm~\ref{alg:gap} was running is bounded by $R_{max} T_0|\sigma|$.
On the bad event, which occurs with probability at most $1-\delta$, the regret is at most $T R_{max}$ giving us a total expected regret of
\[
2w(T_0) T + T_0 |\sigma| R_{max} + \delta T R_{max}
=
2 \sqrt{\dfrac{R^2 \beta_{\sigma}^2 \log (\frac{4e^2T_0^2}{\delta})}{T_0}} T + T_0 |\sigma| R_{max} + \delta T R_{max}
\]
which is upper bounded by
\[
2(2 R \beta_\sigma |\sigma|^2 R^2_{max} T)^{2/3} \sqrt{\log (4e^2 T^3}) + R_{max}
\]
for $T_0 = \left(\frac{2R\beta_\sigma T}{|\sigma| R_{max}}\right)^{2/3}$ and $\delta = 1/T$.

{\bf Part 2: Assumption 4 holds.}
\emph{Case A:} $T_1(\delta) < T_0$. In this case, according to Theorem~\ref{thm:gap}, with probability at least $1-\delta$, Algorithm \ref{alg:gap} finishes in $T_1(\delta)$ episodes and outputs $0.5 \Delta \le \Delta \le 1.5 \Delta$.
This means $\Delta/36 R^2 \beta_\sigma^2 \le h \le \Delta^2/4 R^2 \beta_\sigma^2$. Therefore, by Theorem~\ref{thm:distributiondependentregret2}, we have, regret due to Algorithm~\ref{alg:PhaseForced}
is at most:
\[
\sum_{x \in \sigma} \Delta_x \frac{36 R^2 \beta_\sigma^2 \log T}{\Delta^2} + \frac{8 e^2 R^2 \beta_\sigma^2}{\Delta^2} .
\]
Overall, the expected regret is bounded by
\[
T_1(\delta) R_{max} |\sigma| + \sum_{x \in \sigma} \Delta_x \frac{36 R^2 \beta_\sigma^2 \log T}{\Delta^2} + \frac{8 e^2 R^2 \beta_\sigma^2}{\Delta^2} + R_{max} T \delta .
\]
For $\delta = 1/T$, this becomes
\[
\frac{256 R^2 \beta_\sigma^2}{\Delta^2} \log \frac{512 e^2 R^2 \beta_\sigma^2 T}{\Delta^2} R_{max} |\sigma|
+ \sum_{x \in \sigma} \Delta_x \frac{36 R^2 \beta_\sigma^2 \log T}{\Delta^2} + \frac{8 e^2 R^2 \beta_\sigma^2}{\Delta^2} + R_{max} .
\]

\emph{Case B:} $T_2(\delta) \le T_0 \le T_1(\delta)$. In this regime, Algorithm~\ref{alg:gap} can stop and output ``threshold exceeded'', in which case, expected regret is bounded, as in Part 1, by
\[
2(2 R \beta_\sigma |\sigma|^2 R^2_{max} T)^{2/3} \sqrt{\log (4e^2 T^3}) + R_{max} .
\]
However, it can also happen that Algorithm~\ref{alg:gap} stops and outputs $0.5 \Delta \le \Delta \le 1.5 \Delta$ with probability at least $1-\delta$. In that case, total expected regret is bounded, as in Part 2, Case A, by
\[
\frac{256 R^2 \beta_\sigma^2}{\Delta^2} \log \frac{512 e^2 R^2 \beta_\sigma^2 T}{\Delta^2} R_{max} |\sigma|
+ \sum_{x \in \sigma} \Delta_x \frac{36 R^2 \beta_\sigma^2 \log T}{\Delta^2} + \frac{8 e^2 R^2 \beta_\sigma^2}{\Delta^2} + R_{max} .
\]
Note that the above bound scales as $O(T_1(\delta))$ for $\delta = 1/T$, which is upper bounded by $O(T_2(\delta) \log T_2(\delta))$. But we know that $T_2(\delta) \le T_0$ which means the bound is no larger than
$O(T_0 \log T_0) = O(T^{2/3} \log T)$. So no matter what happens, regret is upper bounded by $O(T^{2/3} \log T)$ in this case.

\emph{Case C:} $T_0 \le T_2(\delta)$. With probability at least $1-\delta$, by Theorem~\ref{thm:gap}, in this case, Algorithm cannot stop and output $\hat{\Delta}$. Instead, it outputs ``threshold exceeded''.
When this happens, Algorithm~\ref{alg:PhaseForced} never gets called and only exploitation rounds follow (line 5). Regret is bounded, just as in Part 1, by
\[
2(2 R \beta_\sigma |\sigma|^2 R^2_{max} T)^{2/3} \sqrt{\log (4e^2 T^3}) + R_{max} .
\]

\end{proof}

\end{document}